\documentclass[doc, floatsintext, hidelinks, table]{apa7}

\usepackage{amsthm}
\usepackage{amssymb}
\usepackage{mathtools}
\usepackage[natbib=true,style=authoryear]{biblatex}
\newcommand{\supplement}{
        \setcounter{figure}{0}
        \setcounter{section}{0}
        \setcounter{equation}{0}
        \setcounter{page}{1}
        \renewcommand{\thepage}{S\arabic{page}}
        \renewcommand{\thesection}{\Alph{section}}
        \renewcommand{\thefigure}{\thesection.\arabic{figure}}
}
\usepackage{csquotes}
\usepackage{hyperref}
\hypersetup{
    colorlinks=true,
    linkcolor=red,
    filecolor=.,
    citecolor=blue,
    urlcolor=blue
    }
\newtheorem{theorem}{Theorem}
\newtheorem*{remark}{Remark}

\title{Structured Secant Methods to Select Smoothing Parameters for General Smooth Models}
\shorttitle{}

\author{Joshua Krause, Jelmer P. Borst, Jacolien van Rij}
\affiliation{{Bernoulli Institute for Mathematics, Computer Science, and Artificial Intelligence}}

\abstract{
General smooth models replace parameters of a regular likelihood with additive models. The models can include parametric terms, Gaussian random effects, and smooth functions of covariates. The latter are parameterized via a reduced-rank spline basis and regularized via weighted quadratic penalties placed on the basis coefficients. Estimates for these weights (i.e., \emph{smoothing parameters}) can be obtained by optimizing the Laplace-approximate Bayesian marginal likelihood. Existing (second-order) methods require the Hessian of the log-likelihood to solve this optimization problem approximately -- exact optimization requires up to fourth order derivatives -- which can be difficult to derive and expensive to evaluate. To address these problems, we present a quasi-Newton variant of the second-order Extended Fellner-Schall (EFS) optimization method. Our qEFS method relies on structured limited-memory secant approximations to the Hessian of the log-likelihood and is principally first-order. However, the approximation can also be accumulated for a sub-block of the Hessian, with the remaining columns being constrained to match those of the actual Hessian. The exact columns then provide additional structure for the sub-block approximation, which becomes more accurate as a result. We show that the qEFS method converges to the EFS method under certain conditions and continues to provide good estimates beyond these circumstances, which we illustrate in simulation studies. Secondary tasks involving the Hessian (confidence interval coverage \& model selection) require partial approximations to achieve close to nominal performance. We provide Hidden Markov and Tweedie model examples, for which the qEFS method is substantially easier to implement than alternative methods.
}

\leftheader{Joshua Krause, Jelmer P. Borst, Jacolien van Rij}

\keywords{Additive Models; GAMs; REML; General Smooth Models; Quasi-Newton; Laplace Approximation; Model Selection}

\authornote{
   \addORCIDlink{Joshua Krause}{0000-0002-3165-2584}
   \addORCIDlink{Jelmer P. Borst}{0000-0002-4493-8223}
   \addORCIDlink{Jacolien van Rij}{0000-0001-7445-5647}

  Correspondence concerning this article should be addressed to Joshua Krause, Bernoulli Institute for Mathematics, Computer Science and Artificial Intelligence, University of Groningen, Nijenborgh 9, 9747 AG Groningen, Netherlands.  E-mail: j.krause@rug.nl}

\begin{document}
\maketitle
\setlength{\parindent}{0pt}
\newpage

\section{Introduction}\label{sec:Introduction}
Generalized additive models \citep[GAMs;][]{hastie_generalized_1986,wood_generalized_2017-1} are a popular type of generalized linear model, in which the mean of independent response variables depends on an additive combination of potentially non-linear functions of covariates. General smooth models extend this principle to any model with a regular likelihood; individual parameters of the latter (or a known function of them) are parameterized via additive models that can themselves include parametric terms, Gaussian random effects, and smooth functions of covariates. GAMs of location, scale, and shape parameters \citep[GAMLSS;][]{rigby_generalized_2005}, Vector Generalized Additive Models \citep[VGAMs;][]{yee_vector_2015}, Markov-swichting GAMs \citep[][]{langrock_markov-switching_2017}, and multivariate additive smooth models are popular special cases \citep[e.g.,][]{wood_smoothing_2016}. More general Hidden Markov models, with transition probabilities and parameters of the emission process depending on smooth functions of covariates \citep[e.g.,][]{michelot_hmmtmb_2025}, and SEIR models of Covid-19 infections in England, with the effect of non-pharmaceutical interventions varying smoothly over time \citep[e.g.,][]{wood_was_2021}, constitute more exotic examples.

Inference of the smooth functions is complicated by the requirement to estimate \emph{smoothing parameters}, determining the appropriate level of smoothness for each of the functions involved. While \citet{wood_smoothing_2016} provided a unified smoothing parameter estimation framework for the class of general smooth models, their approach comes with a steep theoretical cost: To estimate smoothing parameters for generic models, their method requires up to fourth order derivatives of the model's log-likelihood.

Partially motivated by this problem, \citet{wood_generalized_2017} proposed the Extended Fellner-Schall (EFS) update, an extension of the method by \citet{fellner_robust_1986} and \citet{schall_estimation_1991}. To estimate smoothing parameters for general smooth models, the EFS update only requires the Hessian matrix of the log-likelihood with respect to the coefficients involved in the models of the likelihood parameters. Additionally, the EFS update can exploit sparsity of the Hessian and thus often remains efficient for large (multi-level) models. As such, the EFS update goes a long way towards easing the implementation of novel generic smooth models. However, already the expressions of the Hessian elements can become quite complicated for general models, and evaluating them in computer code is often expensive. Similarly, the Hessian matrix might not be sparse or have a sparsity structure that might not be known in advance. While automatic differentiation (AD) could alleviate some of these problems \citep[e.g.,][]{michelot_hmmtmb_2025}, not every AD implementation includes sparsity detection tools and these will not be of any help if the Hessian is not sparse.

In this paper, we present the structured quasi-EFS (qEFS) update to address these limitations. The qEFS update is principally a first-order optimization method, replacing the actual Hessian of the log-likelihood with a structured quasi-Newton secant approximation to the Hessian \citep[e.g.,][]{dennis_convergence_1989,nocedal_numerical_2006}. As a result, the qEFS update only requires the gradient of the log-likelihood with respect to the coefficients involved in the models of the likelihood parameters -- which can also be obtained via AD or approximated numerically via finite differencing. We show that despite this simplification, the qEFS update converges to the full EFS update under certain conditions, thus achieving the performance of a second-order method at a still lower theoretical cost.

The qEFS update can also rely on \emph{limited-memory} quasi-Newton approximations to the Hessian. At the cost of the aforementioned theoretical convergence guarantees, this allows the new update to keep the count of operations low and to be as memory-efficient as required, regardless of whether the actual Hessian is sparse or not. This is particularly useful when working with large (multi-level) models for which alternative methods might be unfeasible. To compensate for the potential loss in accuracy, the (limited-memory) approximations to the Hessian can also be \emph{partial}, i.e., accumulated only for a sub-block of the Hessian, with the remaining columns being constrained to match those of the actual Hessian or a finite difference approximation. The exact columns then effectively provide further structure for the sub-block approximation, which becomes more accurate as a result \citep[e.g.,][]{dennis_convergence_1989}. In consequence, the partial limited-memory approximations tend to remain accurate in practice, resulting in estimates comparable to the full EFS update -- at a lower theoretical \emph{and} computational cost.

Before introducing the qEFS update in detail, we provide the necessary background for general smooth models and the EFS update by \citet{wood_generalized_2017} in section \ref{sec:Background}. Section \ref{sec:Method} then discusses the qEFS update, which has been implemented in the open-source \texttt{mssm} Python toolbox \citep[v $\geq$ 1.2.5;][]{krause_mixed-sparse-smooth-model_2025} to facilitate practical application. In section \ref{sec:Performance} we show under which conditions the qEFS update converges to the EFS update. We also report results from a series of simulation studies to show the robustness of the new method in practice and to illustrate how MSE and secondary task performance (confidence interval coverage \& model selection) depend on the ratio of Hessian columns approximated. Section \ref{sec:Examples} features examples of more exotic models (e.g., Hidden Markov and Tweedie models of location, scale, and shape) that can more easily be estimated with the new update. We also provide an example of a large multi-level model to demonstrate the substantial efficiency gains possible with the new update. In section \ref{sec:Discussion}, we close with a discussion of the new method itself as well as the usefulness of the presented Hessian approximations for other estimation techniques \citep[e.g., integrated nested Laplace approximation and fully Bayesian inference; see][]{rue_approximate_2009,wood_simplified_2019,wood_inference_2020}.

\section{General Smooth Models}\label{sec:Background}
In this section we provide a brief formal introduction to the general smooth model framework presented by \citet{wood_smoothing_2016}. As mentioned in the introduction, a general smooth model is obtained by replacing parameters $\mu,\phi,...\tau$ of a regular log-likelihood $\mathcal{L(\mu,\phi,...\tau)} = log(p(\mathbf{y}|\mu,\phi,...\tau))$ with $g_\mu(\eta_\mu),g_\phi(\eta_\phi),...,g_\tau(\eta_\tau)$ where the $g$ are parameter-specific link functions with known inverse and $\eta_\mu,\eta_\phi,...,\eta_\tau$ are \emph{linear predictors} of the general form given in (\ref{EQ:eta}).

\begin{equation}\label{EQ:eta}
    \eta = \alpha + \beta_1 z_1 +\ ...\ + f_1(x_1) + f_2(x_{2\,1},x_{2\,2}) +\ ...\
\end{equation}

Here $\alpha$ and $\beta_1$ are parametric terms, which can be used to model linear dependencies of $\eta$ on covariates like $z_1$. Note, that some linear predictors might only include an $\alpha$ term so that constant parameters of the likelihood can be estimated as well (e.g., the scale parameter of a GAM). $f_1$ and $f_2$ are smooth functions of covariates $x_1$ and $x_{2\,1},x_{2\,2}$ respectively. The $f_j$ are parameterized as weighted sums of $k_j$ known functions $b$, so that

\begin{equation}\label{eq:smooth_term}
f_j(x_{j\,1},x_{j\,2},...) = \sum^{k_j}_i b_{j\,i}(x_{j\,1},x_{j\,2},...)\boldsymbol{\beta}_{j\,i}.
\end{equation}

Any spline basis can be used for univariate $b$  \citep[e.g., B-spline basis functions;][]{eilers_flexible_1996}, while $b$ functions of multiple covariates can conveniently be obtained by constructing a tensor smooth basis from univariate basis functions \citep[e.g.,][]{wood_lowrank_2006,wood_straightforward_2013}. We refer to \citet{wood_generalized_2017-1} for an overview of the different options. 

In what follows, we use $\boldsymbol{\beta}$ to refer to the total coefficient vector of dimension $N_p$, including the parametric coefficients and basis weights from all linear predictors. Associated with each of the $f_j$ is a quadratic smoothing penalty $\boldsymbol{\beta}^\top\mathbf{S}^j_{\lambda}\boldsymbol{\beta} =\boldsymbol{\beta}_j^\top\mathcal{S}^j_{\lambda}\boldsymbol{\beta}_j$, capturing the function's complexity \citep[or ``roughness'', see][]{silverman_aspects_1985,wood_smoothing_2016}. Here, $\mathbf{S}^j_{\lambda}$ is a zero-padded version of $\mathcal{S}^j_{\lambda}$, the smooth's weighted penalty matrix \citep[e.g.,][]{wood_smoothing_2016,wood_generalized_2017-1}. Each penalty matrix depends on a subset of smoothing parameters $\boldsymbol{\lambda}_j\subset\boldsymbol{\lambda}$. For univariate smooth terms, $\mathcal{S}^j_{\lambda}$ is often simply set to $\lambda_j\mathcal{S}^j$, where $\mathcal{S}^j$ is a single $k_j*k_j$ penalty matrix of fixed structure. However, smooth terms can also be associated with more complex penalties \citep[e.g., tensor smooths;][]{wood_lowrank_2006}, so that in general $\mathcal{S}^j_{\lambda} = \sum_{l} \lambda_{l\,j}\mathcal{S}^{l\,j}$. Additionally, the same subset of smoothing parameters might be used to penalize different smooth terms \citep[see][for a more detailed discussion]{wood_generalized_2017-1}. To avoid notational clutter we will simply use $\mathcal{S}^r$ ($\mathbf{S}^r$) to refer to the (embedded) penalty matrix associated with an individual parameter $\lambda_r \in \boldsymbol{\lambda}$.

From a Bayesian perspective, the smoothing penalties introduce an (improper) multivariate Gaussian ``smoothness'' prior $\boldsymbol{\beta} \sim N(\mathbf{0},\mathbf{S}_\lambda^-)$ for $\boldsymbol{\beta}$ where $\mathbf{S}_\lambda = \sum^j\mathbf{S}^j_\lambda$ and $\mathbf{S}_\lambda^-$ denotes a pseudo-inverse \citep[e.g.,][]{kimeldorf_correspondence_1970,silverman_aspects_1985,rue_approximate_2009,wood_smoothing_2016}. Note, that the prior is improper because the Kernel of $\mathbf{S}_{\lambda}$ will usually be non-trivial \citep[e.g.,][]{wood_smoothing_2016,wood_generalized_2017-1}. First, $\mathbf{S}_{\lambda}$ will usually contain multiple zero blocks on the diagonal reflecting parametric terms. Similarly, it will often be desirable to avoid penalizing the case of any $f_j$ taking the form of a sufficiently simple (e.g., linear or quadratic) function. Formally, we typically have $\tilde{\boldsymbol{\beta}}_j^\top\mathcal{S}^j_{\lambda}\tilde{\boldsymbol{\beta}}_j=0$ for any constellation $\tilde{\boldsymbol{\beta}}_j$ that would result in a sufficiently simple function $f_j$. The connection between the smoothing penalties and Bayesian prior distributions also implies that i.i.d Gaussian random terms can simply be treated just like any other smooth function \citep[e.g.,][]{kimeldorf_correspondence_1970,wood_generalized_2017-1}.

\subsection{Estimating General Smooth Models}\label{sec:reml}

In essence, the approach taken by \citet{wood_smoothing_2016} is a generalization of the empirical Bayes approach to smoothing parameter selection for Gaussian models proposed by \citet{wahba_comparison_1985}. The idea is to select the smoothing parameters that optimize (approximately) the Bayesian marginal likelihood $p(\mathbf{y}|\boldsymbol{\lambda})$, also referred to as the restricted marginal likelihood \citep[REML;][]{wahba_comparison_1985,wood_fast_2011,wood_smoothing_2016}. Given the density $p(\boldsymbol{\beta}|\boldsymbol{\lambda})$ of the aforementioned improper prior on $\boldsymbol{\beta}$, the marginal likelihood for a general smooth model can be computed as shown in (\ref{EQ:reml}).

\begin{equation}\label{EQ:reml}
\begin{split}
p(\mathbf{y}|\boldsymbol{\lambda}) & = \int p(\mathbf{y}|\boldsymbol{\beta}) p(\boldsymbol{\beta}|\boldsymbol{\lambda})d\boldsymbol{\beta} \\ &= \int p(\mathbf{y},\boldsymbol{\beta}|\boldsymbol{\lambda})d\boldsymbol{\beta} \\ &\approx p(\mathbf{y},\hat{\boldsymbol{\beta}}|\boldsymbol{\lambda})\frac{\sqrt{2\pi}^{N_p}}{|\mathcal{H}|^{0.5}}
\end{split}
\end{equation}

The final result in (\ref{EQ:reml}), obtained by substituting a Laplace approximation for the conditional posterior $\boldsymbol{\beta}|\mathbf{y},\boldsymbol{\lambda}$ \citep[see][for a more detailed discussion]{wood_inference_2020}, yields the Laplace-approximate marginal likelihood $p_L(\mathbf{y}|\boldsymbol{\lambda})$ \citep[exactly the marginal likelihood in the Gaussian case;][]{wahba_comparison_1985,wood_smoothing_2016}. Computing $p_L(\mathbf{y}|\boldsymbol{\lambda})$ requires the maximizer $\hat{\boldsymbol{\beta}}$ of the joint log-likelihood $log(p(\mathbf{y},\boldsymbol{\beta}|\boldsymbol{\lambda}))$ and the negative Hessian $\mathcal{H}$ of the latter, evaluated at $\hat{\boldsymbol{\beta}}$. Considering the density of the prior, $\hat{\boldsymbol{\beta}}$ is readily obtained by optimizing the penalized log-likelihood $\mathcal{L}_\lambda(\boldsymbol{\beta}) = \mathcal{L}(\boldsymbol{\beta}) - \frac{1}{2}\boldsymbol{\beta}^\top\mathbf{S}_\lambda\boldsymbol{\beta}$, while $\mathcal{H} = \mathbf{H} + \mathbf{S}_\lambda$ with $\mathbf{H}=-\frac{\partial^2 \mathcal{L}}{\partial \boldsymbol{\beta} \partial \boldsymbol{\beta}^\top}\Big\rvert_{\hat{\boldsymbol{\beta}}}$ corresponding to the negative Hessian of the log-likelihood evaluated at $\hat{\boldsymbol{\beta}}$ \citep[e.g.,][]{wood_smoothing_2016}.

\citet{wood_smoothing_2016} show how to optimize $\mathcal{V}(\boldsymbol{\lambda})=log(p_L(\mathbf{y}|\boldsymbol{\lambda}))$ exactly, via Newton's method. Apart from $\mathbf{H}$, this approach generally requires third and fourth order derivatives of the log-likelihood. In contrast, the EFS update by \citet{wood_generalized_2017} only requires $\mathbf{H}$ to approximately optimize $\mathcal{V}(\boldsymbol{\lambda})$ (see Appendix \ref{app:efs} of the supplementary materials for a review outlining how the update can be derived). The update, defined in (\ref{eq:efs}) needs to be alternated with optimization of $\mathcal{L}_\lambda$ to obtain $\hat{\boldsymbol{\beta}}$ given the current estimate of $\boldsymbol{\lambda}$.

\begin{equation}\label{eq:efs}
    \lambda_r^* = \lambda_{r} \frac{tr\left(\mathbf{S}^{-}_\lambda \mathbf{S}^r\right) - tr\left(\mathcal{H}^{-1} \mathbf{S}^r\right)}{\boldsymbol{\hat{\beta}}^\top\mathbf{S}^r\boldsymbol{\hat{\beta}}}
\end{equation}

Notably, the update in (\ref{eq:efs}) is only well-defined whenever $\mathbf{H}$ is at least positive semi-definite (PSD) and $\mathbf{H} + \mathbf{S}_\lambda$ is positive definite \citep[PD; see Theorem \& Remark 1 given by][]{wood_generalized_2017}. If $\mathbf{H}$ is indefinite, it should be replaced with the expected negative Hessian $\mathbb{E}\mathbf{H}$ or a PSD approximation to the latter. Additionally, the update is only guaranteed to optimize $\mathcal{V}$ exactly (with step-length control) in case $\mathbf{H}$ does not depend on $\boldsymbol{\lambda}$ (see Appendix \ref{app:efs} of the supplementary materials). This will generally be the case if the log-likelihood is exactly quadratic, such as for a strictly additive model \citep[e.g.,][]{wood_generalized_2017}. Otherwise, as illustrated in Figure \ref{fig:taylor_reml}, the update in (\ref{eq:efs}) essentially takes a step towards the maximum of a ``Taylor-approximate'' version of $\mathcal{V}(\boldsymbol{\lambda})$ obtained by replacing $\mathcal{L}$ with a second-order Taylor approximation around the current estimate $\hat{\boldsymbol{\beta}}$. In consequence, the criterion optimized by the EFS update changes with $\boldsymbol{\lambda}$. As pointed out by \citet{wood_generalized_2017}, this is similar to the PQL/POI approach to GAM estimation \citep[e.g.,][]{breslow_approximate_1993,gu_cross-validating_1992}, which also neglects changes in the Hessian with $\boldsymbol{\lambda}$ \citep[see also][for a discussion]{wood_generalized_2017-2}.

\begin{figure*}[!h]
    \caption{Illustration of the EFS Update for General Smooth Models}
    \includegraphics[width=\textwidth]{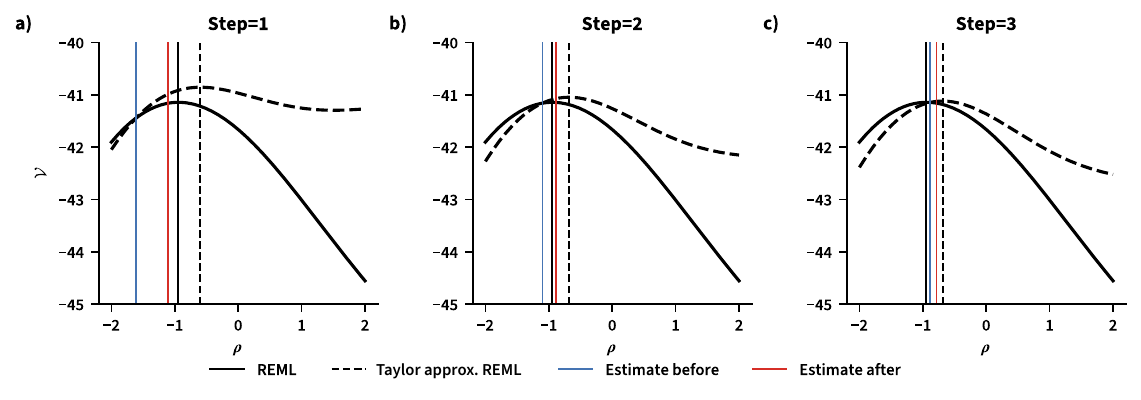}
    \label{fig:taylor_reml}
    {\small
         Panels a - c illustrate the first three EFS updates applied to a single $\lambda$ parameter for a proportional hazard model including a single smooth term. The solid black curve and vertical line correspond to $\mathcal{V}$ and its maximizer. The blue vertical line corresponds to the current estimate of $\lambda$ (i.e., before the EFS update). The dashed curve and vertical line correspond to the ``Taylor-approximate'' version of  $\mathcal{V}$ and its maximizer respectively. The EFS update takes steps (red vertical line) towards the maximum of the approximate criterion, which is different at every step because the Taylor approximation changes after every update to $\lambda$.
        }
\end{figure*}

 The fact that the approximate criterion changes with $\boldsymbol{\lambda}$ also complicates step-length control. One option is to check whether the EFS update still corresponds to an ascent direction for the next ``Taylor-approximate`` criterion \citep[cf.][]{wood_generalized_2017-1}. Alternatively, step-length can simply be omitted since, as pointed out by \citet{wood_generalized_2017}, it appears to be rarely needed in practice to ensure convergence.

 This concludes our review of general smooth models and the EFS update. After a brief introduction to quasi-Newton approaches, the next section presents the structured quasi-Newton variant of the EFS update (qEFS), in which $\mathbf{H}$ is replaced with a structured limited-memory quasi-Newton secant approximation to the negative Hessian.

\section{The qEFS Update to Smoothing Parameters}\label{sec:Method}
In the context of smooth models, quasi-Newton approaches like the popular BFGS update \citep[e.g.,][]{nocedal_numerical_2006} have previously been used to find the estimate $\hat{\boldsymbol{\beta}}$ which maximizes $\mathcal{L}_\lambda$ \citep[e.g.,][]{pya_shape_2015}. These methods substitute an approximation matrix $\hat{\mathcal{H}}^{i}$ for the exact negative Hessian $\mathcal{H}$ to compute the quasi-Newton direction $\mathbf{d}_i=(\hat{\mathcal{H}}^{i})^{-1}\mathbf{g}_i$ with $\mathbf{g}_i=\frac{\partial \mathcal{L_\lambda}}{\partial \boldsymbol{\beta}}\Big\rvert_{\hat{\boldsymbol{\beta}}_{i-1}}$ and $\hat{\boldsymbol{\beta}}_{i-1}$ denoting the estimate of the coefficients at the start of iteration $i$ of the quasi-Newton routine. Quasi-Newton updates are often paired with a line search, which scales the direction $\mathbf{d}_i$ by a factor $\alpha$, to ensure that sufficient progress is made on the optimization problem \citep[see][for a discussion of the Wolfe or Armijo conditions typically imposed on $\alpha$]{nocedal_numerical_2006}. Combined, this results in $\hat{\boldsymbol{\beta}}_{i} = \hat{\boldsymbol{\beta}}_{i-1} + \mathbf{s}_i$ with (scaled) step $\mathbf{s}_i=\alpha\mathbf{d}_i$.

Aside from $\hat{\boldsymbol{\beta}}_{i-1}$, the current approximation matrix $\hat{\mathcal{H}}^{i-1}$ is also updated. Independent of the specific method, the update $\hat{\mathcal{H}}^{i}$ is typically chosen from the set of \emph{symmetric solutions} to the \emph{secant equation}\footnote{Focus on the secant equation is motivated by the fact that the equation holds approximately for a general log-likelihood when substituting the exact negative Hessian, with the error resulting from an inexact second order Taylor approximation.} $\hat{\mathcal{H}}^{i}\mathbf{s}_i = \mathbf{v}_i$, where $\mathbf{v}_i= -\frac{\partial \mathcal{L}_\lambda}{\partial \boldsymbol{\beta}}\Big\rvert_{\hat{\boldsymbol{\beta}}_i} + \frac{\partial \mathcal{L}_\lambda}{\partial \boldsymbol{\beta}}\Big\rvert_{\hat{\boldsymbol{\beta}}_{i-1}}$ denotes the difference in the negative gradient of the penalized log-likelihood after taking the step $\mathbf{s}_i$ \citep[e.g.,][]{nocedal_numerical_2006}. Different quasi-Newton methods then impose different additional constraints on the Hessian update. For example, the BFGS update ensures that the approximation matrix remains positive definite \citep[e.g.,][]{nocedal_numerical_2006}.

Considering that $\hat{\mathcal{H}}$ is used as a replacement for $\mathcal{H}$ to compute the Newton step, it might be tempting to simply substitute the final approximation matrix for the exact Hessian in (\ref{eq:efs}). Unfortunately, this is unlikely to result in reasonable updates to $\boldsymbol{\lambda}$. For once, there is no guarantee that the approximation decomposes into a sum as neatly as the exact Hessian $\mathcal{H} =  \mathbf{H} + \mathbf{S}_\lambda$. In fact, $\hat{\mathcal{H}} - \mathbf{S}_\lambda$ can readily become indefinite, in which case the update in (\ref{eq:efs}) would be ill-defined \citep[][]{wood_generalized_2017}.

We propose to address these problems by performing a structured (or partial) quasi-Newton update \citep[e.g.,][]{dennis_convergence_1989,liu_partial-quasi-newton_2022}. Specifically, rather than maintaining a quasi-Newton approximation for $\mathcal{H}$, we propose to maintain an approximation for $\mathbf{H}$, which is chosen to satisfy the \emph{structured} secant equation in (\ref{eq:structured_secant}).

\begin{equation}\label{eq:structured_secant}
\begin{split}
    &(\hat{\mathbf{H}}_i+\mathbf{S}_\lambda)\mathbf{s}_i = \mathbf{v}_i\\
    &\hat{\mathbf{H}}_i\mathbf{s}_i = \mathbf{v}_i'\ \text{with}\ \mathbf{v}_i' = \mathbf{v}_i - \mathbf{S}_\lambda\mathbf{s}_i
\end{split}
\end{equation}

Note, that 

\begin{equation}\label{eq:v_sharp}
    \mathbf{v}_i' = -\frac{\partial \mathcal{L}}{\partial \boldsymbol{\beta}}\Big\rvert_{\hat{\boldsymbol{\beta}}_i} + \frac{\partial \mathcal{L}}{\partial \boldsymbol{\beta}}\Big\rvert_{\hat{\boldsymbol{\beta}}_{i-1}},
\end{equation}

that is, $\hat{\mathbf{H}}$ is chosen to satisfy the secant equation for the un-penalized log-likelihood $\mathcal{L}$ and so that $\hat{\mathbf{H}}+\mathbf{S}_\lambda$ satisfies the secant equation for the penalized log-likelihood $\mathcal{L}_\lambda$. By definition $\hat{\mathbf{H}}$ is a far more reasonable approximation to the negative Hessian of the log-likelihood compared to $\hat{\mathcal{H}} - \mathbf{S}_\lambda$. However, $\mathbf{H}$ might not even be PSD at convergence of the quasi-Newton routine, in which case trying to maintain a PD approximation to $\mathbf{H}$, for example via BFGS updating, can result in poor approximations to the actual Hessian \citep[e.g.,][]{conn_convergence_1991,nocedal_numerical_2006}.

Therefore, we propose to rely on Symmetric Rank 1 (SR1) updating of $\hat{\mathbf{H}}^{i-1}$. SR1 updating does not enforce positive-definiteness of the approximation, which ensures that $\hat{\mathbf{H}}$ can become PSD or even indefinite \citep[e.g.,][]{conn_convergence_1991,byrd_representations_1994}. We show how to efficiently manipulate the spectrum of $\hat{\mathbf{H}}$ to obtain an additional PSD (or PD) approximation $\hat{\mathbf{H}}_+$ (section \ref{sec:structuredQN}) that can be substituted into (\ref{eq:efs}) to obtain the desired qEFS update. Subsequently, we generalize the structured method so that it can be used to approximate only a sub-block of $\mathbf{H}$, with the remaining columns (and rows) being constrained to match those of a (PSD or PD approximation to) the actual Hessian or a finite difference approximation (section \ref{sec:structuredQN2}).

\subsection{Structured SR1 Secant Approximations of the Hessian}\label{sec:structuredQN}

We start by showing how to perform SR1 updating of $\hat{\mathbf{H}}^{i-1}$ \citep[see also][]{byrd_representations_1994}. Then we show how to manipulate the spectrum of $\hat{\mathbf{H}}$ to guarantee it is PSD (or PD), which can be achieved efficiently when relying on a \emph{limited-memory} SR1 approximation to $\mathbf{H}$. The SR1 update to $\hat{\mathbf{H}}^{i-1}$ satisfying (\ref{eq:structured_secant}) is defined as in (\ref{eq:SR1U}).

\begin{equation}\label{eq:SR1U}
    \hat{\mathbf{H}}^{i} = \hat{\mathbf{H}}^{i-1} + \frac{\left(\mathbf{v}_i'-\hat{\mathbf{H}}^{i-1}\mathbf{s}_i\right)\left(\mathbf{v}_i'-\hat{\mathbf{H}}^{i-1}\mathbf{s}_i\right)^\top\mathbf{s}_i}{\left(\mathbf{v}_i'-\hat{\mathbf{H}}^{i-1}\mathbf{s}_i\right)^\top\mathbf{s}_i}
\end{equation}

The update in (\ref{eq:SR1U}) is ill-defined if $\left(\mathbf{v}_i'-\hat{\mathbf{H}}^{i-1}\mathbf{s}_i\right)^\top\mathbf{s}_i = 0$, in which case it is typically skipped \citep[][]{nocedal_numerical_2006}.

A \emph{limited-memory} (SR1) approximation to $\mathbf{H}$ is obtained by maintaining a queue of only the last $N_u$ pairs of update vectors \citep[e.g.,][]{nocedal_updating_1980,byrd_representations_1994,nocedal_numerical_2006}. Specifically, at every iteration a new pair ($\mathbf{v}_i$, $\mathbf{s}_i$) enters the queue and, if the queue is full, the oldest pair is discarded. The update in (\ref{eq:SR1}) is then applied sequentially to the pairs currently in the queue to obtain $\hat{\mathbf{H}}^{i}$, starting from an initial approximation $\mathbf{H}_0^i = \gamma_i\mathbf{I}$ where $\mathbf{I}$ is an identity matrix and $\gamma_i > 0$ is a scaling factor that can vary with iterations $i$ \citep[e.g.,][]{byrd_representations_1994}. In practice, this update sequence does not have to be evaluated explicitly. Instead, assuming that the update in (\ref{eq:SR1U}) is defined for the $min(i,N_u)$ update vectors in the queue, a compact (limited-memory) representation of $\hat{\mathbf{H}}^{i}$ can be computed directly as shown in \citep[\ref{eq:SR1}; see also][]{byrd_representations_1994,brust_solving_2017}.

\begin{equation}\label{eq:SR1}
\begin{split}
    & \hat{\mathbf{H}}^{i} = \mathbf{H}_0^i + \mathbf{U}^i\mathbf{B}^i\mathbf{U}^{i\top}\ \text{with}\\
    & \mathbf{U}^i = \mathrm{V}^i - \mathbf{H}_0\mathrm{S}^i\ \text{and}\\
    & \mathbf{B} = \left(\mathbf{C}^i + \mathbf{L}^i + \mathbf{L}^{i\top} -\mathrm{S}^{i\top}\mathbf{H}^0\mathrm{S}^{i}\right)^{-1}
\end{split}
\end{equation}

The columns of the $N_p*min(i,N_u)$ matrices $\mathrm{V}^i = [\mathbf{v}_{i-N_u+1}',\mathbf{v}_{i-N_u+2}',...,\mathbf{v}_{i}']$ and $\mathrm{S}^i = [\mathbf{s}_{i-N_u+1},\mathbf{s}_{i-N_u+2},...,\mathbf{s}_{i}]$ hold the last $N_u$ (structured) update vectors while $\mathbf{C}^i$ is diagonal with elements $\mathbf{C}^i_{l\,l}=\mathbf{s}_l^\top\mathbf{v}_l'$ and $\mathbf{L}^i$ is defined as in \citep[\ref{eq:SR1L}; see][]{byrd_representations_1994}.

\begin{equation}\label{eq:SR1L}
  \mathbf{L}^i_{l\,j} =
  \begin{cases}
    \mathbf{s}_l^\top\mathbf{v}_j' & \text{if $l > j$}\\
    0 & \text{otherwise}
  \end{cases}
\end{equation}

In practice it is necessary to actually check for the assumption that (\ref{eq:SR1U}) is well-defined for all update vector pairs in the queue at iteration $i$. Such a check can be based on a triangular decomposition of $\mathbf{B}^i$ \citep[e.g.,][]{byrd_representations_1994} and the representation in (\ref{eq:SR1}) can then simply be computed with any problematic vector pairs excluded. Note, that the ``implicit'' or ``compact'' representation in (\ref{eq:SR1}) can be used directly to perform many operations (e.g., matrix products) involving matrix $\hat{\mathbf{H}}^i$ without forming the latter explicitly \citep[see][]{nocedal_updating_1980,byrd_representations_1994,nocedal_numerical_2006,brust_solving_2017}. Since $\mathbf{H}_0^i$ is very sparse and $\mathbf{U}^i$ is a thin matrix with $N_u << N_p$ columns, this can be far more efficient than what could be achieved with an explicit representation of $\hat{\mathbf{H}}^i$. 

At any iteration $i$ of the quasi-Newton routine, the nearest\footnote{In terms of the Frobenius norm \citep[e.g.,][]{higham_computing_1988}.} PSD (or PD) approximation $\hat{\mathbf{H}}^{i}_+$ to $\hat{\mathbf{H}}^{i}$ can then be obtained by relying on an implicit Eigen-decomposition of $\mathbf{H}_0^i + \mathbf{U}^i\mathbf{B}^i\mathbf{U}^{i\top}$. Appendix (\ref{app:ImplicitEigen}) of the supplementary materials includes a review of the implicit Eigen-decomposition approach, proposed originally by \citet{burdakov_efficiently_2017}, with the main result stated here in (\ref{eq:PSDSR1}).

\begin{equation}\label{eq:PSDSR1}
\begin{split}
    & \hat{\mathbf{H}}^{i}_+ = \mathbf{H}_0^i + \mathbf{P}^i\boldsymbol{\Sigma}_+^i\mathbf{P}^{i\top}\ \text{with}\ \mathbf{P}^i=\mathbf{Q}^i\mathrm{U}^i\text{,}\\
    & \boldsymbol{\Sigma}^i_{+\,l\,l} = \boldsymbol{\Sigma}^i_{l\,l} - min(0,\boldsymbol{\Sigma}^i_{l\,l} + \gamma_i)\text{,}\\
    & \mathbf{R}^i\mathbf{B}^i\mathbf{R}^{i\top} = \mathrm{U}^i\boldsymbol{\Sigma}^i\mathrm{U}^{i\top}\ \text{, and}\ \mathbf{Q}^i\mathbf{R}^i=\mathbf{U}^i
\end{split}
\end{equation}

$\mathbf{U}^i$, $\mathbf{B}^i$, and $\mathbf{H}_0^i$ are defined as in (\ref{eq:SR1}). $\mathbf{Q}^i\mathbf{R}^i=\mathbf{U}^i$ is obtained by forming a ``thin'' QR-decomposition so that $\mathbf{Q}^i$ has $N_u$ columns and $\mathbf{R}^i$ is a $N_u*N_u$ upper-triangular matrix \citep[e.g.,][]{golub_matrix_2013}. As a result, the Eigendecomposition $\mathbf{R}^i\mathbf{B}^i\mathbf{R}^{i\top} = \mathrm{U}^i\boldsymbol{\Sigma}^i\mathrm{U}^{i\top}$ only has to be formed for a $N_u*N_u$ matrix. By definition of $\boldsymbol{\Sigma}^i_{+}$ in line 2 of (\ref{eq:PSDSR1}), it is clear that addition of $\mathbf{H}_0^i=\mathbf{I}\gamma_i$ to $\mathbf{P}^i\boldsymbol{\Sigma}_+^i\mathbf{P}^{i\top}$ results in a positive semi-definite matrix, since the Eigenvalues from $\boldsymbol{\Sigma}^i_{+}$ have been shifted up by just enough, to ensure that negative ones become zero when $\gamma_i$ is added. To ensure that $\hat{\mathbf{H}}^{i}_+$ is numerically PD rather than PSD, a small positive value can be added to all Eigenvalues $l$ for which $\boldsymbol{\Sigma}^i_{+\,l\,l} + \gamma_i = 0$.

The next quasi-Newton direction $\mathbf{d}_i$ can now be computed as $\mathbf{d}_i = (\hat{\mathbf{H}}^{i-1}_+ + \mathbf{S}_\lambda)^{-1}\mathbf{g}_i$. Similarly, the final approximation $(\hat{\mathbf{H}}_++\mathbf{S}_\lambda)^{-1}$ at convergence of the quasi-Newton routine can be substituted for $\mathcal{H}^{-1}$ in the EFS update defined in (\ref{eq:efs}). The required inverse can be obtained using the modified Woodburry identity from \citet{henderson_deriving_1981} which, as shown in (\ref{eq:WBinv}), has the advantage that the inverse can be represented implicitly as well.

\begin{equation}\label{eq:WBinv}
\begin{split}
    & (\hat{\mathbf{H}}_+ + \mathbf{S}_\lambda)^{-1} = (\mathbf{H}_0  + \mathbf{S}_\lambda)^{-1} - \mathbf{M}\mathbf{A}^{-1}\mathbf{N}\ \text{with}\\
    & \mathbf{M} = (\mathbf{H}_0  + \mathbf{S}_\lambda)^{-1}\mathbf{P}\text{,}\
    \mathbf{N} = \boldsymbol{\Sigma}_+\mathbf{P}^\top(\mathbf{H}_0  + \mathbf{S}_\lambda)^{-1}\text{, and}\\
    &\mathbf{A}=\mathbf{I} + \boldsymbol{\Sigma}_+\mathbf{P}^\top(\mathbf{H}_0  + \mathbf{S}_\lambda)^{-1}\mathbf{P}
\end{split}
\end{equation}

Here $\mathbf{H}_0$, $\mathbf{P}$, and $\boldsymbol{\Sigma}_+$ are defined as in (\ref{eq:SR1}) and (\ref{eq:PSDSR1}) and super-scripts indicating the iteration have been omitted to avoid clutter. $\mathbf{A}$, while not symmetric, is non-singular, so the inverse $\mathbf{A}^{-1}$ is guaranteed to be defined at every iteration $i$ of the quasi-Newton routine \citep[see][]{henderson_deriving_1981}. While the inverse $(\mathbf{H}_0  + \mathbf{S}_\lambda)^{-1}$ is theoretically of a $N_p *N_p$ matrix, the latter continues to be symmetric and extremely sparse even for large multi-level models and can thus be computed efficiently by means of a Cholesky decomposition that pivots for sparsity \citep[e.g.,][]{wood_generalized_2017}. Finally, note that the identity in (\ref{eq:WBinv}) holds as long as $(\hat{\mathbf{H}}_+ + \mathbf{S}_\lambda)^{-1}$ exists, even if $\hat{\mathbf{H}}_+$ is only PSD or if $\hat{\mathbf{H}}_+=\mathbf{H}_0 + \mathbf{P}\boldsymbol{\Sigma}_+\mathbf{K}$ where $\mathbf{K}\neq\mathbf{P}^{\top}$ \citep[see][]{henderson_deriving_1981}. Still, at least for the quasi-Newton update, $\hat{\mathbf{H}}^{i-1}_+$ should be set to (numerically) PD to guarantee that $\mathbf{d}_i$ is an ascent direction.

Maintaining an implicit representation of $(\hat{\mathbf{H}}_++\mathbf{S}_\lambda)^{-1}$ has the advantage, that the trace term in (\ref{eq:efs}) can also be computed implicitly, as shown in (\ref{eq:lbfgsTrace}).

\begin{equation}\label{eq:lbfgsTrace}
tr((\hat{\mathbf{H}}_+ + \mathbf{S}_\lambda)^{-1}\mathbf{S}^r)=\sum_l^{N_p} \left(\mathbf{H}_0  + \mathbf{S}_\lambda\right)^{-1}_l\mathbf{S}_l^{r\,\top} - \mathbf{M}_l\mathbf{A}^{-1}(\mathbf{N}\mathbf{S}_l^{r\,\top})
\end{equation}

Here $\left(\mathbf{H}_0  + \mathbf{S}_\lambda\right)^{-1}_l$ denotes row $l$ of matrix $\left(\mathbf{H}_0  + \mathbf{S}_\lambda\right)^{-1}$, $\mathbf{S}_l^{r\,\top}$ denotes column $l$ of matrix $\mathbf{S}^r$, and $\mathbf{M}_l$ denotes row $l$ of matrix $\mathbf{M}$. Note that the sum will rarely have to be computed over all $N_p$ elements, since matrices like $\mathbf{S}^r$ will only contain a handful of non-zero columns, the indices of which will generally be known in advance. Additionally, the matrix vector product $\mathbf{N}\mathbf{S}_l^{r\,\top}$ will fully benefit from the small number of non-zero rows in non-zero columns $\mathbf{S}_l^{r\,\top}$ of $\mathbf{S}^r$, so that each trace -- and ultimately the (quasi-variant of the) update in (\ref{eq:efs}) -- can be computed very efficiently.

\subsection{Incorporating Additional Knowledge About the Hessian}\label{sec:structuredQN2}

Often, it will be possible to evaluate $\mathbf{H}$ or $\mathbb{E}\mathbf{H}$ at least partially, either because certain columns have simple analytical solutions, because individual columns can be approximated via AD, or because the available computational budget permits obtaining finite difference approximations of individual columns. In this section, we show how to incorporate this additional information into the PSD (or PD) approximations $\hat{\mathbf{H}}_+$ \citep[see][for a similar approach to minimax optimization]{liu_partial-quasi-newton_2022}.

We start by assuming (w.l.o.g.), that $\boldsymbol{\beta} = [\mathcal{B}^\top,\mathbf{b}^\top]^\top$ where $\mathcal{B}$ denotes the set of $N_\mathcal{B}$ coefficients for which we can easily evaluate the corresponding column of the Hessian. We can define a corresponding block-factorization for the negative Hessian as shown in (\ref{eq:blockH}).

\begin{equation}\label{eq:blockH}
    \mathbf{H}^i =
    \begin{bmatrix}
        \mathbf{H}^i_{\mathcal{B}\mathcal{B}} & \mathbf{H}^i_{\mathcal{B}\mathbf{b}}\\
        \mathbf{H}^i_{\mathbf{b}\mathcal{B}} & \mathbf{H}^i_{\mathbf{b}\mathbf{b}} \\
    \end{bmatrix}
\end{equation}

Here, $\mathbf{H}_{\mathcal{B}\mathcal{B}}$ for example contains $\frac{\partial^2 \mathcal{L}}{\partial \mathcal{B}\partial\mathcal{B}^\top}\Big\rvert_{\hat{\boldsymbol{\beta}}_{i}}$. We now seek an approximation to (\ref{eq:blockH}) as outlined in (\ref{eq:psdblockH}).

\begin{equation}\label{eq:psdblockH}
    \hat{\mathbf{H}}^i =
    \begin{bmatrix}
        \mathbf{H}^i_{\mathcal{B}\mathcal{B}} & \mathbf{H}^i_{\mathcal{B}\mathbf{b}}\\
        \mathbf{H}^i_{\mathbf{b}\mathcal{B}} & \hat{\mathbf{H}}^i_{\mathbf{b}\mathbf{b}} \\
    \end{bmatrix}
\end{equation}

We are constrained by the fact that we need to be able to transform $\hat{\mathbf{H}}^i$ into a PSD (or PD) approximation for (\ref{eq:efs}) to be defined. Additionally, $\hat{\mathbf{H}}^i_{\mathbf{b}\mathbf{b}}$ should be obtained via quasi-Newton updating, using $min(i,N_u)$ available update vector pairs $(\mathbf{v}_i,\mathbf{s}_i)$. To develop such an approximation we start by assuming that $\mathbf{H}^i_{\mathcal{B}\mathcal{B}}$ is PD so that the inverse $\left(\mathbf{H}^i_{\mathcal{B}\mathcal{B}}\right)^{-1}$ is well defined. As illustrated in section \ref{sec:idpsdapprox}, the PD assumption is technically not necessary but simplifies derivation of the update. For now we assume that, whenever $\mathbf{H}^i_{\mathcal{B}\mathcal{B}}$ is not of full rank in practice, we replace it with a PD approximation $\mathbf{H}^i_{+\,\mathcal{B}\mathcal{B}}$, either obtained through Eigendecomposition or by perturbing the matrix with a scaled identity until a Cholesky succeeds if the Eigendecomposition is too expensive \citep[e.g.,][]{wood_smoothing_2016}. 

Next, we consider the Schur complement defined in Equation (\ref{eq:Schur}), which we will refer to as $\mathbf{D}^i$.

\begin{equation}\label{eq:Schur}
     \mathbf{D}^i \coloneq \mathbf{H}^i/\mathbf{H}^i_{\mathbf{b}\mathbf{b}} = \mathbf{H}^i_{\mathbf{b}\mathbf{b}} - \mathbf{H}^i_{\boldsymbol{b}\mathcal{B}}(\mathbf{H}^i_{\mathcal{B}\mathcal{B}})^{-1}\mathbf{H}^i_{\mathcal{B}\boldsymbol{b}}
\end{equation}

Clearly, if a (PSD) quasi-Newton approximation $\hat{\mathbf{D}}^i$ to the Schur complement could be obtained, this would provide the required approximation $\hat{\mathbf{H}}^i_{\mathbf{b}\mathbf{b}}= \hat{\mathbf{D}}^i + \mathbf{H}^i_{\mathbf{b}\mathcal{B}}(\mathbf{H}^i_{\mathcal{B}\boldsymbol{\beta}})^{-1}\mathbf{H}^i_{\mathcal{B}\mathbf{b}}$. Therefore, we seek a structured secant equation expressed in terms of the Schur complement $\mathbf{D}_i$ defined in (\ref{eq:Schur}). We suggest to choose $\hat{\mathbf{D}}^i$ so that it satisfies the structured secant equation in (\ref{eq:SchurSecant}).

\begin{equation}\label{eq:SchurSecant}
    \begin{split}
        &(\mathbf{v}_{i}')_{[\mathbf{b}]} = \left(\left[\mathbf{0}~\hat{\mathbf{D}}^i + \mathbf{H}^i_{\boldsymbol{b}\mathcal{B}}(\mathbf{H}^i_{\mathcal{B}\mathcal{B}})^{-1}\mathbf{H}^i_{\mathcal{B}\boldsymbol{b}}\right] + \left[\mathbf{H}^i_{\boldsymbol{b}\mathcal{B}}~\mathbf{0}\right]\right)\mathbf{s}_i \\
        &\hat{\mathbf{D}}^i\mathbf{s}_i' = \mathbf{v}_{i}''\ \text{with}\ \mathbf{s}_i'=(\mathbf{s}_i)_{[\mathbf{b}]}\ \text{and}\\
        &\mathbf{v}_{i}'' = (\mathbf{v}_{i}')_{[\mathbf{b}]}- \left(\left[\mathbf{0}~ \mathbf{H}^i_{\boldsymbol{b}\mathcal{B}}(\mathbf{H}^i_{\mathcal{B}\mathcal{B}})^{-1}\mathbf{H}^i_{\mathcal{B}\boldsymbol{b}}\right]+ \left[\mathbf{H}^i_{\boldsymbol{b}\mathcal{B}}~\mathbf{0}\right]\right)\mathbf{s}_i
    \end{split}
\end{equation}

Here $\mathbf{v}_{i}'$ is defined as in (\ref{eq:v_sharp}), and we use $(\mathbf{v}_{i}')_{[\mathbf{b}]}$ notation to define an indexing operation that returns elements in $\mathbf{v}_{i}'$ at the indices of $\boldsymbol{\beta}$ which contain $\mathbf{b}$. The structured equation in (\ref{eq:SchurSecant}) can again be motivated by the fact that, for a strictly quadratic function $\mathcal{L}$, the equality in (\ref{eq:SchurSecant}) is met exactly when substituting $\mathbf{D}^i$ for $\hat{\mathbf{D}}^i$. Given (\ref{eq:SchurSecant}), $\hat{\mathbf{D}}^i = \mathbf{I}\gamma_i' + \mathbf{U}'^i\mathbf{B}'^i\mathbf{U}'^{i\top}$ can readily be obtained via (limited-memory) SR1 updating as defined in (\ref{eq:SR1}) using the structured update vector pairs $(\mathbf{v}_{i}'', \mathbf{s}_i')$ in place of $(\mathbf{v}_{i}', \mathbf{s}_i)$. Subsequently, (\ref{eq:PSDSR1}) can be used to obtain a PSD (or PD) approximation $\hat{\mathbf{D}}^i_+ = \mathbf{I}\gamma_i' + \mathbf{P}'^i\boldsymbol{\Sigma}'^i_+\mathbf{P}'^{i\top}$, which ensures that $\hat{\mathbf{H}}^i_{\mathbf{b}\mathbf{b}}$ and $\hat{\mathbf{H}}^i$ are PSD (or PD) as well. Piecing all of this together, we end up with the PSD (or PD) approximation defined in (\ref{eq:psdblockHsum}).

\begin{equation}\label{eq:psdblockHsum}
    \hat{\mathbf{H}}^i_+ =
    \begin{bmatrix}
    \mathbf{H}^i_{\mathcal{B}\mathcal{B}} & \mathbf{0}\\
    \mathbf{0} & \mathbf{I}\gamma_i' \\
    \end{bmatrix} +
    \begin{bmatrix}
    \mathbf{0} & \mathbf{0}\\
    \mathbf{0} & \mathbf{H}^i_{\mathbf{b}\mathcal{B}}(\mathbf{H}^i_{\mathcal{B}\mathcal{B}})^{-1}\mathbf{H}^i_{\mathcal{B}\mathbf{b}} \\
    \end{bmatrix} +
    \begin{bmatrix}
    \mathbf{0} & \mathbf{H}^i_{\mathcal{B}\mathbf{b}}\\
    \mathbf{H}^i_{\mathbf{b}\mathcal{B}} & \mathbf{0} \\
    \end{bmatrix} +
    \begin{bmatrix}
    \mathbf{0} & \mathbf{0}\\
    \mathbf{0} & \mathbf{P}'^i\boldsymbol{\Sigma}'^i_+\mathbf{P}'^{i\top} \\
    \end{bmatrix}
\end{equation}

Here we again assume that $\mathbf{H}^i_{+\,\mathcal{B}\mathcal{B}}$ is substituted for $\mathbf{H}^i_{\mathcal{B}\mathcal{B}}$ in case the latter is not PD. Note, that each of the final three matrices in (\ref{eq:psdblockHsum}) can readily be defined as a matrix product of the form $\mathbf{P}\boldsymbol{\Sigma}_+\mathbf{K}$ required by the identity in (\ref{eq:WBinv}). Thus, one option to compute $(\hat{\mathbf{H}}_+ + \mathbf{S}_\lambda)^{-1}$, with $\hat{\mathbf{H}}_+$ now defined as in (\ref{eq:psdblockHsum}), is to apply the identity recursively. This is illustrated in (\ref{eq:WBinv1}-\ref{eq:WBinv3}), where iteration-related super-scripts (and sub-script for vectors) have again been omitted to avoid clutter.

\begin{equation}\label{eq:WBinv1}
    \left(\begin{bmatrix}
    \mathbf{H}_{\mathcal{B}\mathcal{B}} & \mathbf{0}\\
    \mathbf{0} & \mathbf{I}\gamma' +\mathbf{H}_{\mathbf{b}\mathcal{B}}(\mathbf{H}_{\mathcal{B}\mathcal{B}})^{-1}\mathbf{H}_{\mathcal{B}\mathbf{b}} \\
    \end{bmatrix} +
    \mathbf{S}_\lambda\right)^{-1} = \left(\begin{bmatrix}
    \mathbf{H}_{\mathcal{B}\mathcal{B}} & \mathbf{0}\\
    \mathbf{0} & \mathbf{I}\gamma' \\
    \end{bmatrix} + \mathbf{S}_\lambda\right)^{-1} -
    \mathbf{M}_1\mathbf{A}_1^{-1}\mathbf{N}_1\\
\end{equation}

\begin{equation}\label{eq:WBinv2}
\begin{split}
    \left(\begin{bmatrix}
    \mathbf{H}_{\mathcal{B}\mathcal{B}} & \mathbf{H}_{\mathcal{B}\mathbf{b}}\\
    \mathbf{H}_{\mathbf{b}\mathcal{B}} & \mathbf{I}\gamma' +\mathbf{H}_{\mathbf{b}\mathcal{B}}(\mathbf{H}_{\mathcal{B}\mathcal{B}})^{-1}\mathbf{H}_{\mathcal{B}\mathbf{b}} \\
    \end{bmatrix} +
    \mathbf{S}_\lambda\right)^{-1} =&\ \left(\begin{bmatrix}
    \mathbf{H}_{\mathcal{B}\mathcal{B}} & \mathbf{0}\\
    \mathbf{0} & \mathbf{I}\gamma' \\
    \end{bmatrix} + \mathbf{S}_\lambda\right)^{-1} -\\
    &\ \mathbf{M}_1\mathbf{A}_1^{-1}\mathbf{N}_1 - \mathbf{M}_2\mathbf{A}_2^{-1}\mathbf{N}_2
\end{split}
\end{equation}

\begin{equation}\label{eq:WBinv3}
    \left(\hat{\mathbf{H}}_+ + \mathbf{S}_\lambda \right)^{-1} =
    \left(\begin{bmatrix}
    \mathbf{H}_{\mathcal{B}\mathcal{B}} & \mathbf{0}\\
    \mathbf{0} & \mathbf{I}\gamma' \\
    \end{bmatrix} + \mathbf{S}_\lambda\right)^{-1} -
    \mathbf{M}_1\mathbf{A}_1^{-1}\mathbf{N}_1 -
    \mathbf{M}_2\mathbf{A}_2^{-1}\mathbf{N}_2 -
    \mathbf{M}_3\mathbf{A}_3^{-1}\mathbf{N}_3
\end{equation}

Since $\mathbf{A}_1$, $\mathbf{A}_2$, and $\mathbf{A}_3$ are of dimensions $N_\mathcal{B} * N_\mathcal{B}$, $2N_\mathcal{B} * 2N_\mathcal{B}$, and $N_u * N_u$ respectively, the implicit representation in (\ref{eq:WBinv3}) can still be more efficient than an explicit representation of $\hat{\mathbf{H}}_+$, as long as $N_\mathcal{B} << N_p$ and care is taken to ensure that the inverse from the previous step is not evaluated explicitly. Similarly, use of (\ref{eq:lbfgsTrace}) to compute the trace required in (\ref{eq:efs}) will also only be more efficient than direct evaluation when $N_\mathcal{B} << N_p$.

\subsubsection{Indefinite and Alternative PSD Approximations of the Hessian}\label{sec:idpsdapprox}

So far we have assumed that $\mathbf{H}_{\mathcal{B}\mathcal{B}}$ is PD, in order for the inverse $(\mathbf{H}_{\mathcal{B}\mathcal{B}})^{-1}$ and more generally the Schur complement $\mathbf{D}$ to be defined. In principle, (\ref{eq:psdblockHsum}) can also be used to obtain a potentially indefinite approximation $\hat{\mathbf{H}}$ of $\mathbf{H}$ directly, by replacing the inverse with a generalized inverse $(\mathbf{H}_{\mathcal{B}\mathcal{B}})^{-}$ and then working in terms of the generalized Schur complement $\mathbf{D} \coloneq \mathbf{H}/\mathbf{H}_{\mathbf{b}\mathbf{b}}= \mathbf{H}_{\mathbf{b}\mathbf{b}} - \mathbf{H}_{\boldsymbol{b}\mathcal{B}}(\mathbf{H}_{\mathcal{B}\mathcal{B}})^{-}\mathbf{H}_{\mathcal{B}\boldsymbol{b}}$. The latter can again be approximated via SR1 updating, relying on the definitions in (\ref{eq:SchurSecant}) with $(\mathbf{H}_{\mathcal{B}\mathcal{B}})^{-}$ substituted for $(\mathbf{H}_{\mathcal{B}\mathcal{B}})^{-1}$. The resulting (potentially indefinite) approximation is defined in (\ref{eq:indefblockHsum}), with $\hat{\mathbf{D}}^i=\mathbf{I}\gamma_i' + \mathbf{U}'^i\mathbf{B}'^i\mathbf{U}'^{i\top}$ now denoting the SR1 approximation to the aforementioned generalized Schur complement.

\begin{equation}\label{eq:indefblockHsum}
    \hat{\mathbf{H}}^i =
    \begin{bmatrix}
    \mathbf{H}^i_{\mathcal{B}\mathcal{B}} & \mathbf{0}\\
    \mathbf{0} & \mathbf{I}\gamma_i' \\
    \end{bmatrix} +
    \begin{bmatrix}
    \mathbf{0} & \mathbf{0}\\
    \mathbf{0} & \mathbf{H}^i_{\mathbf{b}\mathcal{B}}(\mathbf{H}^i_{\mathcal{B}\mathcal{B}})^{-}\mathbf{H}^i_{\mathcal{B}\mathbf{b}} \\
    \end{bmatrix} +
    \begin{bmatrix}
    \mathbf{0} & \mathbf{H}^i_{\mathcal{B}\mathbf{b}}\\
    \mathbf{H}^i_{\mathbf{b}\mathcal{B}} & \mathbf{0} \\
    \end{bmatrix} +
    \begin{bmatrix}
    \mathbf{0} & \mathbf{0}\\
    \mathbf{0} & \mathbf{U}'^i\mathbf{B}'^i\mathbf{U}'^{i\top} \\
    \end{bmatrix}
\end{equation}

Working in terms of the generalized Schur complement also provides a route to a PSD approximation of $\hat{\mathbf{H}}_+$ for which it is sufficient to assume that $\mathbf{H}_{\mathcal{B}\mathcal{B}}$ in (\ref{eq:indefblockHsum}) is PSD and that $(\mathbf{I} - \mathbf{H}_{\mathcal{B}\mathcal{B}}(\mathbf{H}_{\mathcal{B}\mathcal{B}})^{-})\mathbf{H}_{\mathcal{B}\boldsymbol{b}}=\mathbf{0}\Leftrightarrow \mathbf{H}_{\mathcal{B}\mathcal{B}}(\mathbf{H}_{\mathcal{B}\mathcal{B}})^{-}\mathbf{H}_{\mathcal{B}\boldsymbol{b}}=\mathbf{H}_{\mathcal{B}\boldsymbol{b}}$. In that case $\hat{\mathbf{H}}_+$ will be PSD if the generalized Schur complement $\mathbf{D}$ is PSD as well. Since the latter can be approximated via SR1 updating, a PSD approximation can readily be obtained via (\ref{eq:PSDSR1}) and substituted into (\ref{eq:indefblockHsum}), ensuring that this last requirement can be met in practice. Similarly, if a PSD approximation to $\mathbf{H}_{\mathcal{B}\mathcal{B}}$ is desired, but $(\mathbf{I} - \mathbf{H}_{\mathcal{B}\mathcal{B}}(\mathbf{H}_{\mathcal{B}\mathcal{B}})^{-})\mathbf{H}_{\mathcal{B}\boldsymbol{b}} \neq \mathbf{0}$, the off-diagonal blocks in the third matrix in (\ref{eq:indefblockHsum}) can be replaced with the projection $\mathbf{H}_{\mathcal{B}\mathcal{B}}(\mathbf{H}_{\mathcal{B}\mathcal{B}})^{-}\mathbf{H}_{\mathcal{B}\boldsymbol{b}}$. Note, that for the update in (\ref{eq:efs}) as well as the inverse computations outlined in (\ref{eq:WBinv1}-\ref{eq:WBinv3}) we still require $\hat{\mathbf{H}}_+ + \mathbf{S}_\lambda$ to be PD and the initial inverse on the right-hand side of (\ref{eq:WBinv3}) to be defined.

\subsection{Improving the Efficiency of the qEFS Update}

One might object that the structured quasi-Newton updates outlined in the previous sections are unnecessarily expensive, considering that $(\hat{\mathbf{H}}_+ + \mathbf{S}_\lambda)^{-1}$ is only really required at convergence as a substitute in (\ref{eq:efs}). Indeed, as illustrated in more detail in Appendix \ref{app:combqn} of the supplementary materials, we can simply rely on unstructured (limited-memory) BFGS updating to estimate $\hat{\boldsymbol{\beta}}$. We can then either retain the last $N_u$ pairs of update vectors $(\mathbf{v}_i,\mathbf{s}_i)$ of the quasi-Newton routine or sample $N_u$ steps (or rather perturbations) $\mathbf{s}_i$ and then set $\hat{\boldsymbol{\beta}}_{i-1} = \hat{\boldsymbol{\beta}} - \mathbf{s}_i$, where $\hat{\boldsymbol{\beta}}$ again denotes the estimate at convergence of the quasi-Newton routine \citep[cf.][]{berahas_quasi-newton_2022}. In both cases, the $\mathbf{v}_i$ are then transformed as shown in (\ref{eq:v_sharp}) to obtain $N_u$ structured update vector pairs $\mathbf{v}_i',\mathbf{s}_i$ which in turn can be used to construct $\hat{\mathbf{H}}_+$. Note, that the sampling strategy is particularly attractive when relying on (\ref{eq:psdblockHsum}), since the same $\mathbf{H}_{\mathcal{B}\mathcal{B}}$ and $\mathbf{H}_{\mathcal{B}\mathbf{b}}$ can then be used to obtain $\mathbf{v}_i'',\ i=1,...,N_u$ as defined in (\ref{eq:SchurSecant}).

\section{Performance of the qEFS Update}\label{sec:Performance}

The performance of the quasi-Newton updates described in sections (\ref{sec:structuredQN}-\ref{sec:structuredQN2}) depends on the choice for $N_u$ and, in case (\ref{eq:psdblockHsum}) is used, on the choice for $N_\mathcal{B}$. Naturally, $\mathbf{H}_{\mathcal{B}\mathcal{B}}$ in (\ref{eq:psdblockHsum}) will eventually reflect the entire negative Hessian $\mathbf{H}$ as $N_\mathcal{B} \rightarrow N_p$. In that case, use of $\hat{\mathbf{H}}_+$ as part of a qEFS update will coincide with performing the EFS update in (\ref{eq:efs}). In practice, the $N_\mathbf{b} >> 0$ case, where $N_\mathbf{b} = N_p-N_\mathcal{B}$, is more interesting -- in particular the case of $N_\mathbf{b} = N_p$ that arises when relying on (\ref{eq:PSDSR1}). 

Fortunately, the updates described here inherit many beneficial properties of conventional SR1 updating. For example, as illustrated by Theorem \ref{theorem:thessian}, a straightforward generalization of Theorem 6.1 given by \citet{nocedal_numerical_2006}, it can be shown that $\hat{\mathbf{H}}$, defined as in (\ref{eq:indefblockHsum}), equals $\mathbf{H}$ under certain conditions for strictly quadratic log-likelihood functions and any choice of $N_\mathbf{b}$ so that $N_p > N_u = N_\mathbf{b} \geq 1$.

\begin{theorem}\label{theorem:thessian}
    Let $\mathcal{L}$ be a strongly concave quadratic function of the form $\mathcal{L}(\boldsymbol{\beta}) = \mathbf{a}^\top\boldsymbol{\beta} + \frac{1}{2}\boldsymbol{\beta}^T\mathbf{H}\boldsymbol{\beta}$ with $\mathbf{H}$ symmetric. We assume that we have a set of $\mathbf{s}_i,\ i=1,...,N_u$ steps (or perturbations) with $\hat{\boldsymbol{\beta}}_{i} = \hat{\boldsymbol{\beta}}_{i-1} + \mathbf{s}_i$ and that the corresponding $N_u$ transformed $\mathbf{s}_i',\ i=1,...,N_u$, defined in (\ref{eq:SchurSecant}), are linearly independent.  $\hat{\mathbf{D}}^i = \mathbf{I}\gamma_i' + \mathbf{U}'^i\mathbf{B}'^i\mathbf{U}'^{i\top}$ again denotes the limited-memory approximation to the true generalized Schur complement $\mathbf{D} \coloneq \mathbf{H}/\mathbf{H}_{\mathbf{b}\mathbf{b}} = \mathbf{H}_{\mathbf{b}\mathbf{b}} - \mathbf{H}_{\boldsymbol{b}\mathcal{B}}(\mathbf{H}_{\mathcal{B}\mathcal{B}})^{-}\mathbf{H}_{\mathcal{B}\boldsymbol{b}}$ as defined in section \ref{sec:idpsdapprox}. Similarly, we define $\tilde{\mathbf{D}}^i = \mathbf{I}\gamma_{N_u}' + \mathbf{U}'^i\mathbf{B}'^i\mathbf{U}'^{i\top}$.
    
    Then, for any $N_\mathbf{b}$ so that $N_p > N_u = N_\mathbf{b} \geq 1$, $\hat{\mathbf{D}}^{N_u} = \mathbf{D}$ and $\hat{\mathbf{H}}^{N_u} = \mathbf{H}$, with $\hat{\mathbf{H}}^{N_u}$ defined as in (\ref{eq:indefblockHsum}), iff $(\tilde{\mathbf{D}}^{i-1}\mathbf{s}_i' - \mathbf{v}_{i}'')^\top\mathbf{s}_i' \neq 0\ \forall\ i$ with $\mathbf{v}_{i}''$ defined as in (\ref{eq:SchurSecant}).
\end{theorem}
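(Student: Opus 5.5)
The plan is to reduce the statement to the classical finite-termination property of SR1 updating (Theorem 6.1 of \citet{nocedal_numerical_2006}), applied to the generalized Schur complement $\mathbf{D}$ instead of the full Hessian. Since $\mathcal{L}$ is exactly quadratic, its Hessian is constant, and I write $\mathbf{H}$ for the (negative) Hessian in the sign convention used throughout the paper. This gives
\[
\mathbf{v}_i'=\mathbf{H}\mathbf{s}_i \quad\text{for every } i,
\]
directly from (\ref{eq:v_sharp}), and the blocks $\mathbf{H}_{\mathcal{B}\mathcal{B}},\mathbf{H}_{\mathcal{B}\mathbf{b}},\mathbf{H}_{\mathbf{b}\mathbf{b}}$ do not depend on $i$.

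The first step is to show that the transformed pairs satisfy the Schur secant equation exactly, i.e.\ $\mathbf{v}_i''=\mathbf{D}\mathbf{s}_i'$. I split $\mathbf{s}_i=[(\mathbf{s}_i)_{[\mathcal{B}]}^\top,\mathbf{s}_i'^\top]^\top$, so that
\[
(\mathbf{v}_i')_{[\mathbf{b}]}=\mathbf{H}_{\mathbf{b}\mathcal{B}}(\mathbf{s}_i)_{[\mathcal{B}]}+\mathbf{H}_{\mathbf{b}\mathbf{b}}\mathbf{s}_i'.
\]
Subtracting the two correction terms in (\ref{eq:SchurSecant}), with $(\mathbf{H}_{\mathcal{B}\mathcal{B}})^{-}$ in place of the inverse, removes $\mathbf{H}_{\mathbf{b}\mathcal{B}}(\mathbf{s}_i)_{[\mathcal{B}]}$ and leaves
\[
\bigl(\mathbf{H}_{\mathbf{b}\mathbf{b}}-\mathbf{H}_{\mathbf{b}\mathcal{B}}(\mathbf{H}_{\mathcal{B}\mathcal{B}})^{-}\mathbf{H}_{\mathcal{B}\mathbf{b}}\bigr)\mathbf{s}_i'=\mathbf{D}\mathbf{s}_i'.
\]
Hence $(\mathbf{v}_i'',\mathbf{s}_i')$, $i=1,\dots,N_u$, is a sequence of secant pairs for the fixed symmetric $N_\mathbf{b}\times N_\mathbf{b}$ matrix $\mathbf{D}$, and $\hat{\mathbf{D}}$ is an SR1 approximation built from these pairs. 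No definiteness of $\mathbf{D}$ is needed, since SR1 does not require it.

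The second step is to link the compact representation to sequential updating. Because $N_u$ pairs are present, the compact form (\ref{eq:SR1}) equals the result of applying (\ref{eq:SR1U}) sequentially to the pairs $(\mathbf{v}_i'',\mathbf{s}_i')$, starting from the single initial matrix $\mathbf{I}\gamma_{N_u}'$ \citep{byrd_representations_1994}. The intermediate matrices of that sequence are exactly the $\tilde{\mathbf{D}}^{i}$ of the statement, and $\tilde{\mathbf{D}}^{N_u}=\hat{\mathbf{D}}^{N_u}$. The representation is well defined, i.e.\ $\mathbf{B}'$ exists and no pair is discarded, precisely when every denominator $(\tilde{\mathbf{D}}^{i-1}\mathbf{s}_i'-\mathbf{v}_i'')^\top\mathbf{s}_i'$ is nonzero. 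This is how I intend to read the ``iff''.

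The third step proves the hereditary property by induction:
\[
\tilde{\mathbf{D}}^{k}\mathbf{s}_j'=\mathbf{v}_j''\quad\text{for all } j\le k.
\]
The case $j=k$ is the secant equation enforced by (\ref{eq:SR1U}). For $j<k$, the rank-one correction applied to $\mathbf{s}_j'$ is proportional to
\[
(\mathbf{v}_k''-\tilde{\mathbf{D}}^{k-1}\mathbf{s}_k')^\top\mathbf{s}_j'
=\mathbf{s}_k'^\top\mathbf{D}\mathbf{s}_j'-\mathbf{s}_k'^\top\tilde{\mathbf{D}}^{k-1}\mathbf{s}_j'.
\]
By the induction hypothesis $\tilde{\mathbf{D}}^{k-1}\mathbf{s}_j'=\mathbf{v}_j''=\mathbf{D}\mathbf{s}_j'$, so this term is zero, using the symmetry of $\mathbf{D}$ and of $\tilde{\mathbf{D}}^{k-1}$. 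At $k=N_u=N_\mathbf{b}$ the matrices $\hat{\mathbf{D}}^{N_u}$ and $\mathbf{D}$ agree on $N_\mathbf{b}$ linearly independent vectors, which span $\mathbb{R}^{N_\mathbf{b}}$, so $\hat{\mathbf{D}}^{N_u}=\mathbf{D}$.

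Conversely, if some denominator vanishes, the compact representation of $\mathbf{B}'$ breaks down. In the paper's convention that pair is then excluded, so fewer than $N_\mathbf{b}$ independent secant conditions are imposed and the stated construction does not yield $\mathbf{D}$. This direction is the main obstacle. A coincidental equality cannot be ruled out by a counting argument alone, so I would argue it at the level of well-definedness of the representation used in (\ref{eq:indefblockHsum}), as Nocedal and Wright do.

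Finally, I substitute $\hat{\mathbf{D}}^{N_u}=\mathbf{D}$ into (\ref{eq:indefblockHsum}). The $\mathcal{B}\mathcal{B}$ and off-diagonal blocks of $\hat{\mathbf{H}}^{N_u}$ are the exact blocks of $\mathbf{H}$ by construction. The $\mathbf{b}\mathbf{b}$ block equals
\[
\mathbf{D}+\mathbf{H}_{\mathbf{b}\mathcal{B}}(\mathbf{H}_{\mathcal{B}\mathcal{B}})^{-}\mathbf{H}_{\mathcal{B}\mathbf{b}}=\mathbf{H}_{\mathbf{b}\mathbf{b}},
\]
by the definition of the generalized Schur complement, so $\hat{\mathbf{H}}^{N_u}=\mathbf{H}$. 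Conversely, if $\hat{\mathbf{H}}^{N_u}=\mathbf{H}$, comparing the $\mathbf{b}\mathbf{b}$ blocks returns $\hat{\mathbf{D}}^{N_u}=\mathbf{D}$, so the two equalities in the conclusion are equivalent to each other.
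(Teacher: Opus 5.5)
Your proposal is correct and is in substance the paper's proof. The paper tracks $\mathbf{E}^i=\tilde{\mathbf{D}}^i-\mathbf{D}$ and shows that each SR1 step keeps $\ker(\mathbf{E}^{i-1})$ and adds $\mathbf{s}_i'$, which is exactly your hereditary property $\tilde{\mathbf{D}}^k\mathbf{s}_j'=\mathbf{D}\mathbf{s}_j'$ for $j\le k$, resting on the same fact that the correction applied to $\mathbf{s}_j'$ is proportional to $\mathbf{s}_k'^\top\mathbf{E}^{k-1}\mathbf{s}_j'=0$ (your version also shows that this Nocedal--Wright-style induction needs no sequential structure of the steps).

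The paper proves only the ``if'' direction, and you are right to distrust the converse, but your stated reason is off. Here the matrix inverted to form $\mathbf{B}'$ equals $\mathrm{S}'^\top(\mathbf{D}-\gamma_{N_u}'\mathbf{I})\mathrm{S}'$, where $\mathrm{S}'$ is the invertible matrix of the $\mathbf{s}_i'$, so $\mathbf{B}'$ exists and the compact formula returns exactly $\mathbf{D}$ whenever $\mathbf{D}-\gamma_{N_u}'\mathbf{I}$ is nonsingular, even if some intermediate denominator vanishes; moreover, for $\mathbf{D}=\gamma_{N_u}'\mathbf{I}$ every denominator vanishes yet $\hat{\mathbf{D}}^{N_u}=\mathbf{D}$. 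The ``only if'' is therefore false as a matrix statement and can only be read as a condition for the sequential recursion to be well defined.
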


\begin{proof}
     First, note that the $\tilde{\mathbf{D}}^i = \mathbf{I}\gamma_{N_u}' + \mathbf{U}'^i\mathbf{B}'^i\mathbf{U}'^{i\top},\ i = 1,...,N_u$ form the explicit sequence of $N_u$ SR1 updates, defined as in (\ref{eq:SR1U}), that when applied to the initial approximation $\mathbf{I}\gamma_{N_u}'$ results in $\tilde{\mathbf{D}}^{N_u} = \hat{\mathbf{D}}^{N_u}$. By assuming that $(\tilde{\mathbf{D}}^{i-1}\mathbf{s}_i' - \mathbf{v}_{i}'')^\top\mathbf{s}_i' \neq 0\ \forall\ i$, each of these updates is defined. Also, by definition of $\mathcal{L}$ and those given in (\ref{eq:SchurSecant}), $\tilde{\mathbf{D}}^{i}\mathbf{s}_{i}'  = \mathbf{v}_{i}'' = \mathbf{D}\mathbf{s}_{i}'\ \forall\ i$.
     
    Next we define the residual matrix $\mathbf{E}^i = \tilde{\mathbf{D}}^i - \mathbf{D}$ reflecting the residual between the approximation at iteration $i$ and the true generalized Schur complement \citep[cf.,][]{dennis_numerical_1996,ye_towards_2023}. Note, that because $\mathbf{v}_{i}'' = \mathbf{D}\mathbf{s}_{i}'$ we have

     $$\mathbf{v}_{i}'' - \tilde{\mathbf{D}}^{i-1}\mathbf{s}_i'= \mathbf{D}\mathbf{s}_i' - \tilde{\mathbf{D}}^{i-1}\mathbf{s}_i'=-\mathbf{E}^{i-1}\mathbf{s}'$$

     Thus, via (\ref{eq:SR1}) we can define the residual matrix recursively as
     
     \begin{equation}\label{eq:residual_matrix}
         \mathbf{E}^i = \mathbf{E}^{i-1} - \frac{\mathbf{E}^{i-1}\mathbf{s}_i'\mathbf{s}_i'^\top\mathbf{E}^{i-1}}{\mathbf{s}_i'^\top\mathbf{E}^{i-1}\mathbf{s}_i'}
     \end{equation}
    
     which when multiplied by $\mathbf{s}_{i}'$ yields
    
     $$
     \mathbf{E}^i\mathbf{s}_{i}' = \mathbf{E}^{i-1}\mathbf{s}_{i}' - \mathbf{E}^{i-1}\mathbf{s}_{i}' = 0
     $$
     
     Similarly, $\mathbf{E}^i\mathbf{a}=\mathbf{0}$ for any vector $\mathbf{a}$ in the Kernel $Ker(\mathbf{E}^{i-1})$ of $\mathbf{E}^{i-1}$. It follows, that $Ker(\mathbf{E}^{i-1}) \in Ker(\mathbf{E}^{i})$ and that the latter now includes $\mathbf{s}_{i}'$ \citep[cf.,][]{ye_towards_2023}. By the assumed linear independence of the $\mathbf{s}_i'$  the Kernel of $\mathbf{E}^i$ expands for every $i$ with $Ker(\mathbf{E}^{N_u}) = \mathbb{R}^{N_u}$, which implies that $\mathbf{E}^{N_u} = \mathbf{0}$. This in turn implies $\tilde{\mathbf{D}}^{N_u}=\hat{\mathbf{D}}^{N_u}=\mathbf{D}$. Substituting $\hat{\mathbf{D}}^{N_u}$ into (\ref{eq:indefblockHsum}) we get $\hat{\mathbf{H}}^{N_u} = \mathbf{H}$.
\end{proof}

\begin{remark}
    The proof essentially applies Lemma 3 by \citet{ye_towards_2023}, which outlined the Kernel-expanding property of the SR1 residual matrix defined here in \citep[\ref{eq:residual_matrix}; see also][]{dennis_numerical_1996}, to the approximation of the generalized Schur complement $\tilde{\mathbf{D}}$. \citet{nocedal_numerical_2006} also provide a different proof of the same result that applies when using a sequential strategy to obtain the $\mathbf{s}_i,\ i=1,...,N_u$. In fact, Theorem 6.1 given by \citet{nocedal_numerical_2006} applies directly for the $N_p = N_u = N_\mathbf{b}$ case, when (\ref{eq:SR1}) is used to obtain $\hat{\mathbf{H}}$.
    
    In contrast, theorem \ref{theorem:thessian} applies independent of the strategy chosen to obtain the $\mathbf{s}_i,\ i=1,...,N_u$ (i.e., $\mathbf{s}_i',\ i=1,...,N_u$) and thus also applies to the sampling strategy outlined in Appendix \ref{app:combqn} of the supplementary materials, since $\hat{\boldsymbol{\beta}}_{i}$ can be replaced with the maximizer of the penalized likelihood $\hat{\boldsymbol{\beta}}$ for all $i$. Theorem \ref{theorem:thessian} also applies when working in terms of the standard Schur complement since $\hat{\mathbf{H}}=\hat{\mathbf{H}}_+$ with $\hat{\mathbf{H}}_+$ defined as in (\ref{eq:psdblockH}) if we further assume that $\mathbf{H}$ is PSD (or PD) which is however not required.
\end{remark}

\begin{figure}[h!]
\caption{Simulation Study MSE Performance}
\begin{center}
\includegraphics[width=\linewidth]{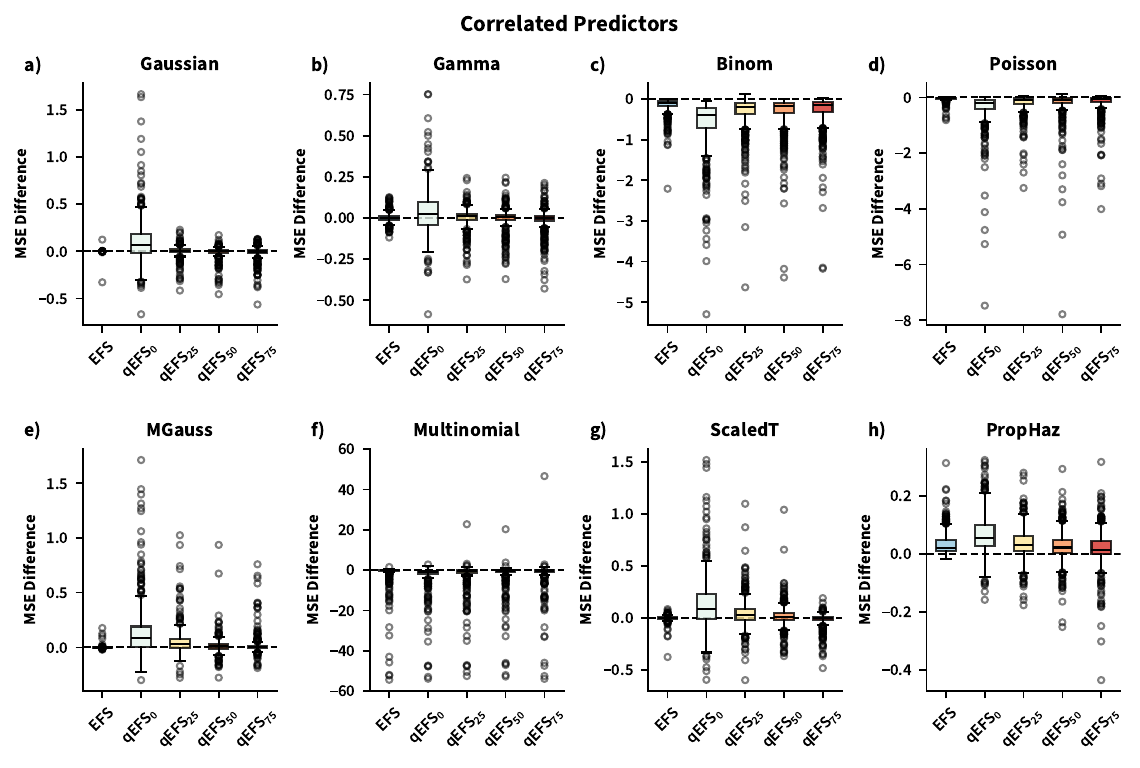}
\end{center}

        {\small
         Standardized MSE score differences from the second simulation study defined in Appendix \ref{app:sim} of the supplementary materials in which predictor variables were strongly correlated. MSE scores were first standardized by the median MSE score, computed separately per panel (i.e., likelihood). The standardized MSE scores of the method by \citet{wood_smoothing_2016} were then subtracted from the standardized MSE scores of the EFS and qEFS updates. Thus, negative values imply a better (i.e., lower) MSE score for the latter two in a particular simulation. Axes were truncated at -60 for the Multinomial models (panel f). As described in Appendix \ref{app:sim} of the supplementary materials, subscripts for the different qEFS updates indicate the percentage of coefficients included in $\mathcal{B}$.
        }

\label{fig:sim_mse}
\end{figure}

Theorem \ref{theorem:thessian}, just like Theorem 6.1 given by \citet{nocedal_numerical_2006}, applies only to strictly quadratic log-likelihood functions. Just like the regular SR1 update, the structured updates described in sections (\ref{sec:structuredQN}-\ref{sec:structuredQN2}) can however be expected to continue to produce reasonable Hessian approximations for sufficiently smooth general likelihood functions \citep[see][]{nocedal_numerical_2006}. In fact, for sufficiently smooth general likelihood functions Theorem \ref{theorem:thessian} continues to apply asymptotically, provided that the maximum step norm $max(||\mathbf{s}_i||:\ \ i=1,...,N_u) \rightarrow 0$. This is illustrated by Theorem \ref{theorem:thessian2} and the remarks given in Appendix \ref{app:theorem2} of the supplementary materials.

In practice it will often be desirable to set $N_u << N_\mathbf{b}$ and $N_\mathcal{B} << N_p$, to benefit from the efficiency of the implicit representations defined in (\ref{eq:PSDSR1}) and (\ref{eq:psdblockHsum}). The qEFS update can still be expected to produce reasonable Hessian approximations in these cases -- especially when partial approximations are used (i.e., $N_\mathcal{B} > 0$). $\mathbf{H}_{\mathcal{B}\mathcal{B}}$ and $\mathbf{H}_{\mathcal{B}\boldsymbol{b}}$ then provide additional structure for the sub-block approximation, effectively constraining the solution to the structured secant equation in (\ref{eq:SchurSecant}). As a result the approximation to $\mathbf{H}$ overall becomes more accurate \citep[e.g.,][]{dennis_convergence_1989}. To illustrate that the qEFS update continues to perform well in practice, a series of simulation studies was conducted in which the MSE performance of the EFS and four qEFS updates, all differing in the size of $\mathcal{B}$ and with $N_u=0.25N_\mathbf{b}$, was compared against the method by \citet{wood_smoothing_2016}. Appendix \ref{app:sim} of the supplementary materials contains an overview of the different simulation studies, while Figure \ref{fig:sim_mse} provides an exemplary overview of the results. Generally, MSE scores decreased and became less variable with increases to $N_\mathcal{B}$. However, already for $N_\mathcal{B}=0$ the qEFS update produced estimates of sufficient quality for practical application ($\text{qEFS}_0$ in Figure \ref{fig:sim_mse}).

\subsection{Secondary Task Performance}

Beyond estimation of $\boldsymbol{\lambda}$ we should be able to expect that, whenever $\hat{\mathbf{H}}$ is a reasonable approximation to $\mathbf{H}$ (or $\mathbb{E}\mathbf{H}$), the approximation should also perform reasonably on secondary tasks involving the Hessian. For example, approximate confidence intervals (CIs) for individual functions $f$ or entire linear predictors $\eta$ can be based on the normal approximation to the posterior $\boldsymbol{\beta}|\mathbf{y},\boldsymbol{\lambda} \sim N(\hat{\boldsymbol{\beta}},\mathcal{H}^{-1})$ \citep[e.g.,][]{marra_coverage_2012}. Similarly, selecting between two smooth models can be based on a smoothing-parameter uncertainty-corrected version of the conditional Akaike Information Criterion \citep[cAIC;][]{akaike_information_1992,greven_behaviour_2010} equal to $-2\mathcal{L}(\hat{\boldsymbol{\beta}}) + 2tr(\mathbf{V}_{\beta}\mathbf{H})$ \citep[][]{wood_smoothing_2016}. Here, $\mathbf{V}_{\beta}$ is the covariance matrix of a normal approximation to the marginal posterior $\boldsymbol{\beta}|\mathbf{y}$ defined implicitly through $\boldsymbol{\beta}|\mathbf{y},\boldsymbol{\lambda} \sim N(\hat{\boldsymbol{\beta}},\mathcal{H}^{-1})$ and the asymptotic result that $\boldsymbol{\rho} \sim N(\hat{\boldsymbol{\rho}}^*,\mathbf{H}_\rho^{-})$, where $\mathbf{H}_\rho = -\frac{\partial^2 \mathcal{V}}{\partial \boldsymbol{\rho} \partial \boldsymbol{\rho}^\top}\Big\rvert_{\hat{\boldsymbol{\rho}}^*}$,  $\boldsymbol{\rho}=log(\boldsymbol{\lambda})$, and $\hat{\boldsymbol{\rho}}^*$ denotes the maximizer\footnote{Technically, the EFS update is not guaranteed to provide an estimate $\hat{\boldsymbol{\lambda}}=exp(\hat{\boldsymbol{\rho}}^*)$ because it neglects dependence of $\mathbf{H}$ on $\boldsymbol{\lambda}$, but we can replace $\mathbf{H}_\rho$, which is also needed to compute $\mathbf{V}_\beta$, with the equivalent obtained for the final ``Taylor-approximate'' version of $\mathcal{V}$ discussed in section \ref{sec:Background} (see also Figure \ref{fig:taylor_reml}). The PQL/POI approach to GAM estimation again provides precedent \citep[see for example][section 3.4.3]{wood_generalized_2017-2}.} of $\mathcal{V}(exp(\boldsymbol{\rho}))$ \citep[e.g.,][]{wood_smoothing_2016}.

\begin{figure}[h!]
\caption{Simulation Study Confidence Interval Coverage}
\begin{center}
\includegraphics[width=\linewidth]{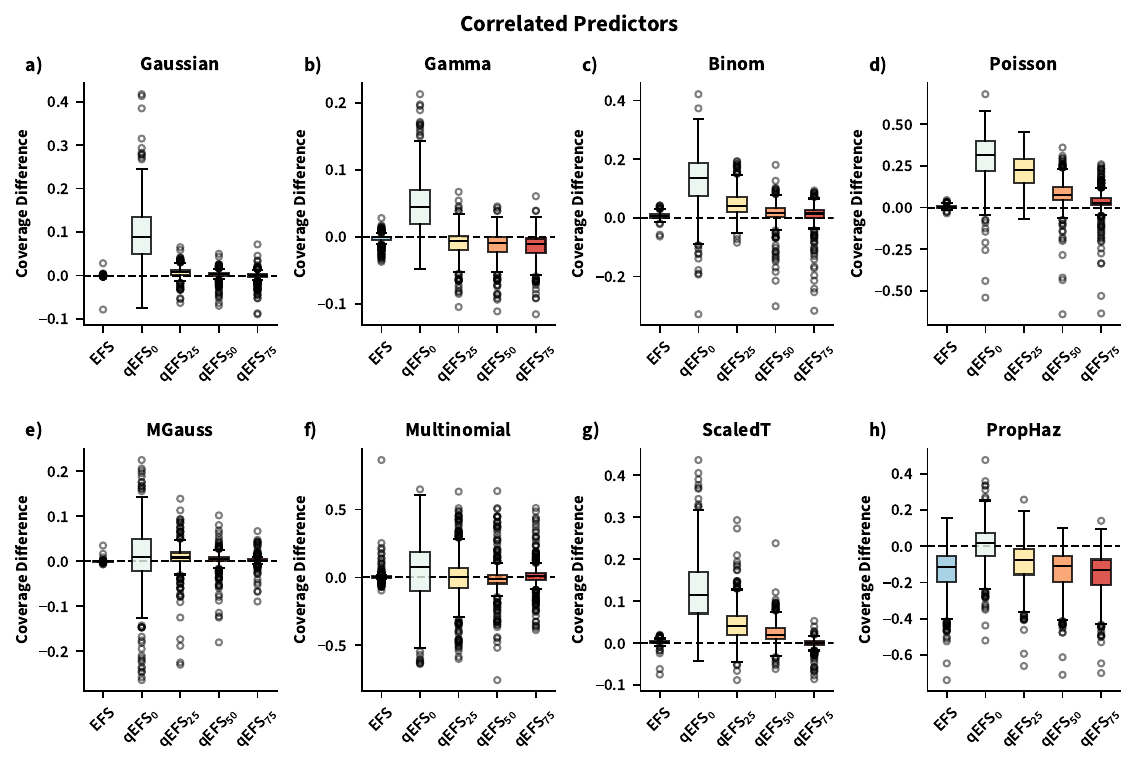}
\end{center}

        {\small
         Standardized differences in average confidence interval (CI) coverage from the second simulation study defined in Appendix \ref{app:sim} of the supplementary materials in which predictor variables were strongly correlated. Average CI coverage scores of each simulation were standardized like the MSE scores visualized in Figure \ref{fig:sim_mse}. To maintain consistency, the signs of the standardized CI coverage scores were flipped so that negative values again imply a better (i.e., higher) coverage score for the EFS and qEFS updates in a particular simulation.
        }

\label{fig:sim_coverage}
\end{figure}

To test whether $\hat{\mathbf{H}}$ continues to be a useful substitute for $\mathbf{H}$ in these secondary tasks, a second set of simulation studies was conducted (a more detailed description can again be found in Appendix \ref{app:sim} of the supplementary materials). First, we compared the average coverage of the simulated (i.e., true) linear predictor achieved by an approximate 95\% CI when relying on the EFS update, qEFS updates with varying $N_\mathcal{B}$ and $N_u=0.25N_\mathbf{b}$, and the method by \citet{wood_smoothing_2016}. Then we investigated the performance of the uncertainty-corrected cAIC based on $\hat{\mathbf{H}}$ when selecting between a smooth and random term. Choosing $\mathcal{B}$ and $N_u$ for this second task is more complicated -- Appendix \ref{app:sim} of the supplementary materials describes the strategy implemented here but we note that other strategies might work just as well (or better).

\begin{figure}[h!]
\caption{Simulation Study Model Selection}
\begin{center}
\includegraphics[width=\linewidth]{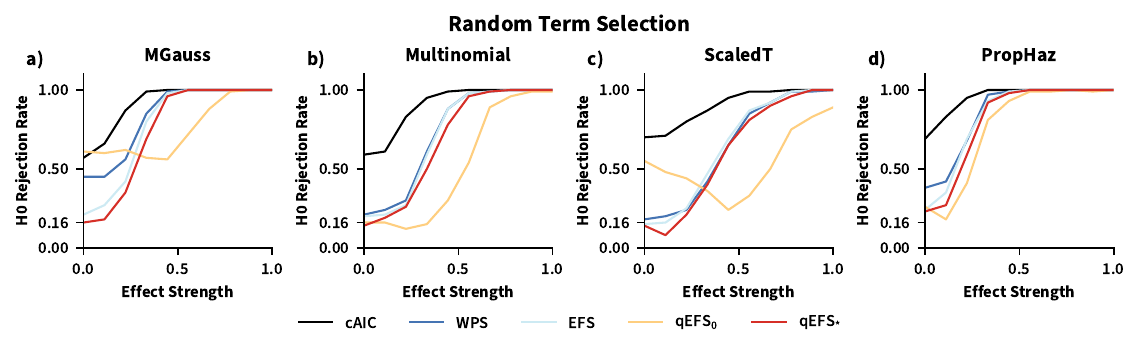}
\end{center}

        {\small
         Percentage of simulations in which the more complex model, here including 40 random intercepts, was selected over the simpler model by the cAIC for different effect strengths $e$. Nominal performance for $e=0$ would correspond to a percentage of approximately 16\%. The uncorrected cAIC (black line) performs significantly worse than that \citep[see also][]{greven_behaviour_2010,saefken_unifying_2014,wood_smoothing_2016}. cAICs based on the method by \citet{wood_smoothing_2016} (line for WPS) and the EFS update (line for EFS) are much more conservative. The cAIC based on the qEFS update for which $\mathcal{B}$ was chosen as described in Appendix \ref{app:sim} of the supplementary materials performed similarly (line for $\text{qEFS}_*$). In contrast performance of the cAIC based on the qEFS update for which $N_\mathcal{B}=0$ was more volatile.
        }

\label{fig:sim_selection}
\end{figure}

Figures \ref{fig:sim_coverage} and \ref{fig:sim_selection} again provide exemplary visualizations for this second set of simulation studies. Generally, as was the case for MSE performance, coverage improved with increases to $N_\mathcal{B}$. In contrast to MSE performance, some models (e.g., Scaled-T in Figure \ref{fig:sim_coverage}c) however continued to benefit more drastically from increases to the latter. A similar pattern emerged in the model selection study. Using, $\hat{\mathbf{H}}$ to approximately correct for smoothing parameter uncertainty generally produced a more conservative cAIC (see Figure \ref{fig:sim_selection}). Compared to the EFS update and the method by \citet{wood_smoothing_2016}, the performance of the cAIC based on the $N_\mathcal{B}=0$ approximation to $\mathbf{H}$ was however more sensitive to the specific likelihood. Choosing $\mathcal{B}$ (and $N_u$) as described in Appendix \ref{app:sim} of the supplementary materials drastically improved the criterion's stability (see the line for the $\text{qEFS}_*$ update in Figure \ref{fig:sim_selection}), resulting in performance almost identical to the EFS update.

In summary, these results indicate that sufficiently accurate estimates of the $f$ can be obtained with a relatively small value for $N_\mathcal{B}$ and $N_u$. In contrast, secondary tasks will benefit more drastically from increasing $N_\mathcal{B}$ when $N_u < N_\mathbf{b}$. This suggests that in practice, the final $\hat{\mathbf{H}}$ should either be re-computed after model estimation with $N_\mathcal{B}$ as large as possible, or at least with $N_u$ as close to $N_\mathbf{b}$ as possible -- to benefit from the convergence results described in appendix \ref{app:theorem2} of the supplementary materials.

\section{Examples}\label{sec:Examples}
\subsection{Massive Multi-level Additive Model}

As a first example we present a scaled-up version of the Gaussian random smooth model used in one of the simulation studies described in Appendix \ref{app:sim} of the supplementary materials with

$$\mu = f(x_0) + f(x_1) + f(x_2) + f(x_3) + f_s(x_0)$$

The $f_s$ are random non-linear functions \citep[e.g.,][]{wood_generalized_2017-2}. Instead of simulating 250 observations for 25 levels $s$, as was the case in the simulation studies, we now simulated data for 5,000 levels, resulting in a total of one million observations. Datasets including factor variables with 5,000 or more levels (e.g., subject identifiers) are not uncommon in the behavioral and cognitive sciences \citep[e.g.,][]{van_rij_analyzing_2019}. The scaled-up model includes 5,004 smooth functions parameterized by 50,027 coefficients but still requires estimation of only six smoothing parameters since the random functions $f_s$ all share two $\lambda$ parameters. 

\begin{figure}[h!]
\caption{Estimates for the Massive Multi-level Additive Model}
\begin{center}
\includegraphics[width=\linewidth]{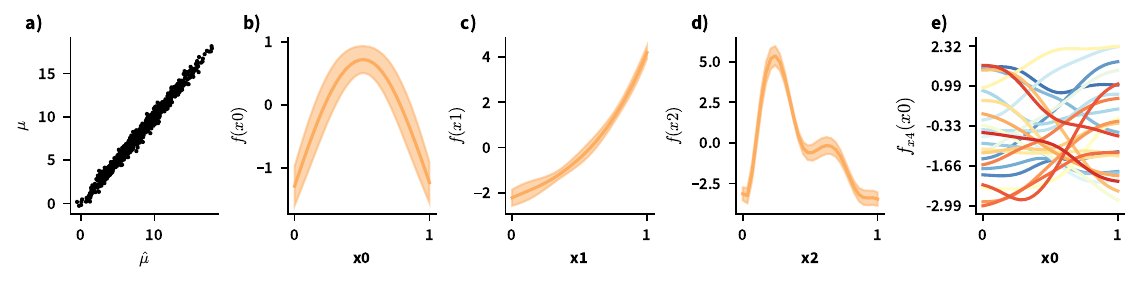}
\end{center}

        {\small
         Overview of the estimates obtained for the simulated large multi-level model. The first panel shows a scatter plot of a random sample of the true $\boldsymbol{\mu}$ against the estimate $\hat{\boldsymbol{\mu}}$. Panels b-d show estimates of the group-level smooth functions, and approximate 95\% confidence intervals, which should be compared to the functions shown in Figure \ref{fig:sim_func} of the supplementary materials. Panel e shows estimates of the random non-linear functions for a random sample of levels $s$.
        }

\label{fig:mixed_model}
\end{figure}

A similar example was presented by \citet{wood_generalized_2017}, who pointed out that estimating such a model only became feasible with the EFS update, which can exploit the sparsity in $\mathcal{H}$. The qEFS update presented here can be even more efficient, depending on the choice for $N_\mathcal{B}$ and $N_u$. In fact, as illustrated in Figure \ref{fig:mixed_model}, the model can be estimated accurately, even with $N_\mathcal{B}=0$ and $N_u=30$. This implies that $N_u=30$ distinct Eigenvalues in $\boldsymbol{\Sigma}$ (see \ref{eq:PSDSR1}) are sufficient to obtain a reasonable ``low-rank'' approximation to $\mathbf{H}$ for this model. Such an extreme approximation remains feasible, since each $s$ makes, on average, a similar (often diagonally dominant) contribution to $\mathbf{H}$. Fortunately, as also suggested by the third simulation study reported in Appendix \ref{app:sim} of the supplementary materials, this will often be the case for multi-level models. Therefore, in practice when $\mathcal{B}$ is chosen to contain all the coefficients we would like to think of as ``fixed'' (i.e., parametric terms, basis coefficients of (group-level) smooth functions) we can expect that update (\ref{eq:psdblockHsum}) will often produce estimates indistinguishable from the EFS estimates, even if $N_u << N_\mathbf{b}$ and $N_\mathbf{b} > 1e^4$ because a low-rank approximation to $\mathbf{H}_{\mathbf{b}\mathbf{b}}$ will remain feasible.

\subsection{A HsMM of Energy Prices}

As a second example we present a model of the dataset by \citet{sanchez-espigares_mswm_2021}, which is a record of the energy price in Spain between 2002 and 2008 during work days, as well as a selection of possibly predictive covariates. A Hidden Markov Model (HMM) of the data was presented previously by \citet{michelot_hmmtmb_2025}. They proposed to rely on a two-state Markov chain for the latent process model and suggested the following model for energy prices $Y_t$ while in latent state $S_t=j$:

$$
Y_t|S_t=j \sim N(\mu_{j\,t},\sigma_{j\,t})
$$

with $\mu_{j\,t} = f_j(\text{EurDol}_t)$ and $log(\sigma_{j\,t})$ modeled as a third-degree polynomial of $\text{EurDol}_t$, which denotes the exchange rate between Euro and US. Dollar \citep[e.g.,][]{michelot_hmmtmb_2025}. Here we opted to parameterize $log(\sigma_{j\,t})$ with an additional smooth function of $\text{EurDol}_t$ instead, just like the mean. Additionally, we assumed a two-state semi-Markov chain for the latent process, which allows for a fixed state transition matrix but requires explicit models of each state's duration \citep[e.g.,][]{yu_hidden_2010} -- we opted for gamma distributions normalized to a maximum state duration of 100 time-points.

\begin{figure}[h!]
\caption{Estimates for the HsMM of Energy Prices}
\begin{center}
\includegraphics[width=\linewidth]{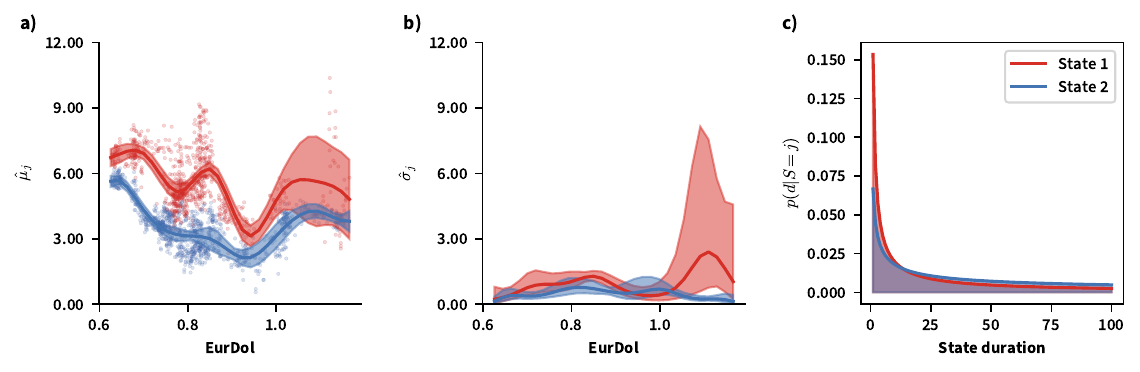}
\end{center}

        {\small
         Overview of the estimates obtained for the HsMm of Energy prices. The first panel shows the relationship between the mean energy price and exchange rate, separately for both latent states. The second panel provides the same information for the standard deviation. Shaded areas correspond to approximate 95\% confidence intervals. The scattered dots in the first panel reflect observed energy prices, which have been colored to reflect the most likely underlying latent state. The last panel shows the pdfs of the estimated Gamma distributions of each latent state's duration.
        }

\label{fig:energy_model}
\end{figure}

Deriving the necessary expressions to obtain $\mathbf{H}$ for Hidden (semi) Markov Models (HsMMs) is quite tedious \citep[e.g.,][]{lystig_exact_2002,krause_hierarchical_2026}. Additionally, computing $\mathbf{H}$ quickly becomes computationally expensive, both in terms of CPU time and memory required. These problems can be avoided when relying on the qEFS update presented here, which does not require evaluation of $\mathbf{H}$. Note, that the log-likelihood $\mathcal{L}$ of H(s)MMs is typically a complex surface featuring multiple local maxima, even for simple models like the one considered here \citep[see][for a more detailed discussion]{michelot_hmmtmb_2025}. We assume that a distinct maximizer $\hat{\boldsymbol{\beta}}$ still exists for any given $\boldsymbol{\lambda}$ (i.e., $\mathcal{L}_\lambda$ has a global maximum) and search for it via a Basin-hopping algorithm, which pairs local quasi-Newton updates with a stochastic global search strategy \citep[e.g.,][]{wales_global_1997}.

Figure \ref{fig:energy_model} shows the estimates obtained when fitting the model via the qEFS update with $N_\mathcal{B}=0$ and $N_u=30$. The estimates obtained from the ``full'' EFS update mainly distinguish themselves through narrower approximate confidence intervals for the smooth function in the model of $\sigma_{j\,t}$. Both models suggest distinct non-linear relationships between the mean and variability of energy prices and exchange rate. Additionally, the duration of the first latent state, associated with higher energy prices on average, is typically shorter than the duration of the second latent state (see Figure \ref{fig:energy_model}c).

\subsection{A Location, Scale, and Shape Tweedie Model of Mackerel Egg Densities}

For the final example, we replicate the Tweedie model of Mackerel egg counts sampled at different locations off the coast of western Europe, which was presented originally by \citet{wood_generalized_2017}. The authors suggest to model Egg $Count_i$ with a Tweedie distribution as follows:

\begin{equation}\label{eq:twmodel}
    \begin{split}
        &Count_i \sim tweedie(\mu_i,p_i,\phi_i)\ \text{with}\\
        &log(\mu_i) = log(\text{vol}_i) + f_{\mu}(\text{lo}_i,\text{la}_i) + f_{\mu}(\text{T20}_i) + f_{\mu}(\text{S20}_i) + f_{\mu}(\sqrt{\text{b.depth}}) + b_{s(i)}\text{,}\\
        &\theta_i=h(p_i) = f_{\theta}(\sqrt{\text{b.depth}})\text{, and } log(\phi_i) = f_{\phi}(\sqrt{\text{b.depth}})
    \end{split}
\end{equation}

$\text{vol}$, $\text{lo}$, $\text{la}$, $\text{T20}$, $\text{S20}$, and $\text{b.depth}$ in (\ref{eq:twmodel}) reflect the volume of water sampled, longitude, latitude, water temperature at 20 m, salinity level at 20 m, and seabed depth for a specific egg count sample. $s$ identifies the specific ship that recorded observation $i$. The model includes random intercepts $b_{s(i)} \sim N(0,\sigma_b)$ for the latter and smooth functions of all covariates but $log(\text{vol})$ which simply acts like an offset \citep[e.g.,][]{wood_smoothing_2016}. Finally, $h$ is a specific link function chosen by \citet{wood_generalized_2017} to guarantee that $1 < p_i < 2\ \forall\ i$.

The specific Tweedie model considered by \citet{wood_generalized_2017} assumes that $var(Count_i) = \phi_i\mu_i^{p_i}$, with $p_i$ and $\phi_i$ being allowed to vary as functions of $\sqrt{\text{b.depth}}$ as shown in (\ref{eq:twmodel}). The authors used the EFS update to fit this model, since the method by \citet{wood_smoothing_2016} would require third and fourth order derivatives of the corresponding log-likelihood, which are not known. Stable implementation of the partial second (mixed) derivatives of the Tweedie density with respect to the three parameters is still challenging however and can again be avoided when relying on the qEFS update presented here to estimate the model in (\ref{eq:twmodel}).

\begin{figure}[h!]
\caption{Estimates for the Tweedie Model of Mackerel Data}
\begin{center}
\includegraphics[width=\linewidth]{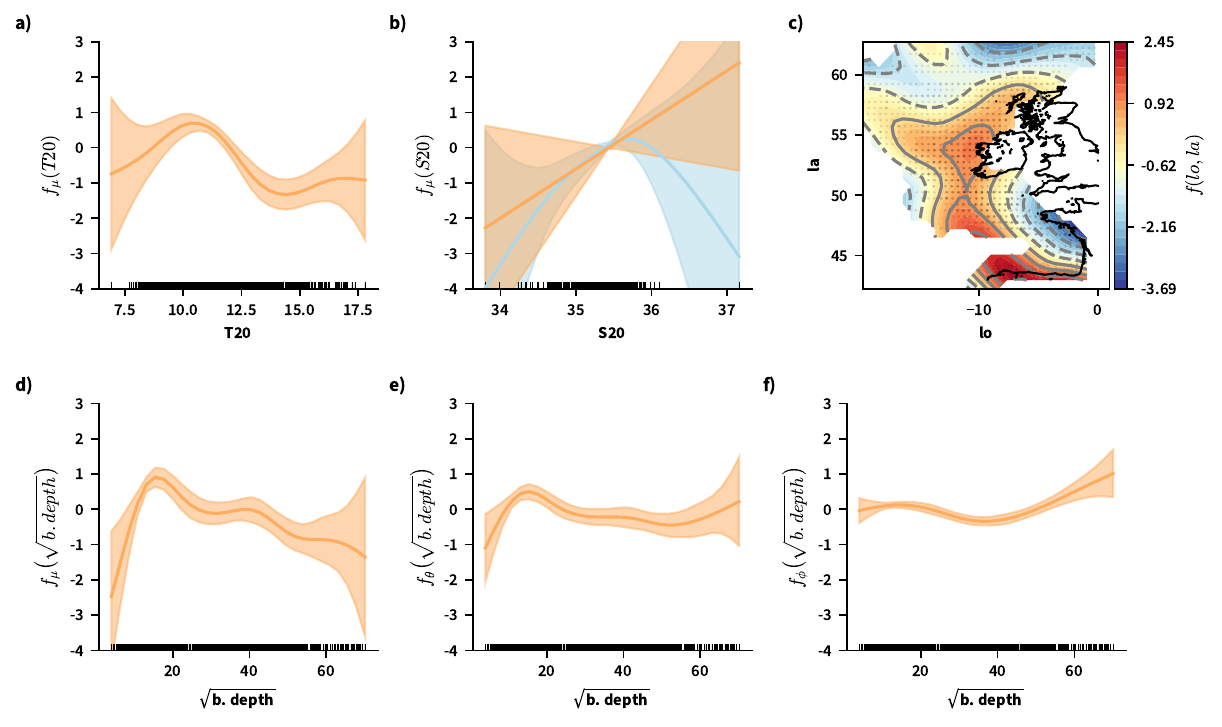}
\end{center}

        {\small
         Overview of the estimates obtained for the location, scale, and shape Tweedie model of Mackerel egg densities. Panels a, b, c, and d include function estimates and approximate 95\% confidence intervals for the smooth functions involved in the model of $\mu$. Panel b includes estimates obtained from both the qEFS (orange) and EFS (blue) update. Panels e and f depict estimates of the smooth functions involved in the models of $\theta=h(p)$ and $\phi$ respectively. Rug plots have been included to highlight covariate distributions -- note the presence of a single observation with a salinity level above 37 in panel b.
        }

\label{fig:mackerel_model}
\end{figure}

Figure \ref{fig:mackerel_model} shows the estimates obtained when fitting the model, which has $N_p=204$ coefficients with $N_\mathcal{B}=N_u=50$. The EFS and qEFS updates produce different estimates of $f_{\mu}(\text{S20})$. The corresponding EFS estimate of the latter, depicted in blue in Figure \ref{fig:mackerel_model}b, is quite non-linear, ultimately due to a single low-count observation with a comparatively high salinity level. The non-linear EFS estimate is recovered by the qEFS update when including all coefficients parameterizing $f_3$ in the set $\mathcal{B}$\footnote{In practice, this might be prudent in general for functions of covariates with few extreme values.} -- which was not the case originally since the content of $\mathcal{B}$ was selected at random. Doing so does not noticeably affect any other function estimates and partial residual plots (not shown here) are anyway reasonable for both estimates. For the remaining smooth functions, both the EFS and qEFS update produce very similar estimates. Notably, both models suggest a distinct bump in magnitude of all functions of seabed depth around $\sqrt{\text{b.depth}}\approx14.14$, coinciding with the continental shelf break \citep[e.g.,][]{wood_generalized_2017}.

\section{Discussion}\label{sec:Discussion}
Here we presented the partial structured quasi-EFS (qEFS) update, principally a first-order method to estimate the smoothing parameters of general smooth models which nonetheless allows to incorporate available information about the true Hessian. In contrast to existing methods sharing the same goal, notably the EFS update by \citet{wood_generalized_2017} and the Newton method by \citet{wood_smoothing_2016}, the qEFS update thus theoretically only requires an implementation of the model's log-likelihood to enable estimation, since the corresponding gradient can in praxis be obtained via automatic differentiation (AD) or replaced with a finite difference approximation. For more exotic smooth models, like the Hidden (semi) Markov and Tweedie models of section \ref{sec:Examples}, this drastically reduces the theoretical work necessary upfront.

The qEFS update also offers more precise control about the computing resources required during estimation, since the number $N_u$ of update vectors used to form the secant approximation as well as the number of columns constrained to match the exact Hessian $N_\mathcal{B}$ can readily be adjusted. In theory there is a trade-off between accuracy of the secant approximations and efficiency: as $N_\mathcal{B}$ increases, the method eventually and naturally converges to the (second-order) EFS update. With $N_\mathcal{B}\leq N_p$, asymptotic convergence of the secant approximations is only guaranteed for sufficiently smooth general likelihood functions and with $N_u=N_\mathbf{b}$ (see Appendix \ref{app:theorem2} of the supplementary materials). Convergence to the true Hessian in practice with $N_\mathcal{B} < N_p$ can only be guaranteed for quadratic $\mathcal{L}$ -- and again requires $N_u=N_\mathbf{b}$. However, the examples and simulation studies presented here illustrate that in practice the qEFS update continues to perform well with $N_u \leq N_\mathbf{b}$ and $N_\mathcal{B} < N_p$, likely because the $N_\mathcal{B}$ exact columns provide additional structure which enables a high-quality ``low-rank'' approximation to $\mathbf{H}_{\mathbf{b}\mathbf{b}}$.

Additionally, the EFS update by \citet{wood_generalized_2017} also optimizes $\mathcal{V}$ exactly only in the quadratic case -- because only for a quadratic $\mathcal{L}$ will it generally be the true that $\mathbf{H}$ does not depend on $\boldsymbol{\lambda}$. This has not been detrimental to the success of the method, likely because in practice the EFS update not only produces high-quality estimates of $\boldsymbol{\lambda}$ for general likelihood functions but also often is more robust to numerical problems plaguing theoretically more complex methods like the one by \citet{wood_smoothing_2016} (see the discussion in Appendix \ref{app:sim} of the supplementary materials). The examples and simulation studies presented here again show that this remains the case when relying on the qEFS update. 

Notably, the structured secant methods underlying the qEFS update will also be useful beyond the single scenario considered here -- essentially another empirical Bayes approach to smoothing parameter estimation \citep[cf.][]{wood_smoothing_2016}. More generally, $(\hat{\mathbf{H}}_+ + \mathbf{S}_\lambda)^{-1}$ or an approximation to

$$\begin{bmatrix}
\mathbf{V}_\beta & \mathbf{0}\\
\mathbf{0} & \mathbf{H}_\rho^-\\
\end{bmatrix}$$

based on $\hat{\mathbf{H}}_+$, with $\mathbf{V}_\beta$ and $\mathbf{H}_\rho^-$ defined as in section \ref{sec:Performance}, could readily be used as the (initial) inverse metric of HMC samplers of $\boldsymbol{\beta}|\mathbf{y},\boldsymbol{\lambda}$ and $\boldsymbol{\beta},\boldsymbol{\lambda}|\mathbf{y}$ (given a prior for $\boldsymbol{\lambda}$) respectively \citep[or alternatively as proposal distributions for Metropolis samplers; ][]{hoffman_no-u-turn_2014,betancourt_conceptual_2018,wood_inference_2020}. Initialized like this, samplers will potentially converge after only a very brief adaptation phase, which could drastically reduce the CPU time required to sample the respective posteriors. The adaptation phase could also be used to further refine the inverse metric if necessary, which further alleviates the risk that $\hat{\mathbf{H}}$ might be a poor approximation to $\mathbf{H}$. Notably, this initialization strategy comes at no additional theoretical cost, since HMC samplers anyway require the ability to evaluate $\mathcal{L}$ and corresponding gradient \citep[e.g.,][]{hoffman_no-u-turn_2014}. Similarly, future work could explore the feasibility of substituting $\hat{\mathbf{H}}$ for $\mathbf{H}$ as part of the (simplified) nested integrated Laplace approximation \citep[INLA; see][]{rue_approximate_2009,wood_simplified_2019}, which could again enable (approximate) inference about $\boldsymbol{\beta}$ and $\boldsymbol{\lambda}$ at the low theoretical cost of a first order method.

\section*{Implementation in Software}
The different variants of the qEFS update have been implemented in the open-source \texttt{mssm} Python toolbox \citep[v $\geq 1.2.5$][]{krause_mixed-sparse-smooth-model_2025}, available on GitHub at \url{https://github.com/JoKra1/mssm} and distributed via the Python Packaging Index.

\section*{Supplementary materials}
We have attached Supplementary materials with additional information to the main manuscript. Additionally, we set up a supplementary GitHub repository containing code to replicate all simulation studies and examples as well as the visualizations of the results. The repository is available at: \url{https://github.com/JoKra1/lSQEFS_supplementary}.

\section*{Acknowledgements}
We want to thank Jeremias Krause for his feedback on an earlier version of this manuscript which substantially improved the paper.

\printbibliography

\newpage
\supplement
\pagestyle{plain}
\begin{refsection}
\section*{Supplementary Materials for ``Structured Secant Methods to Select Smoothing Parameters For General Smooth Models''}

Joshua Krause\textsuperscript{1}, Jelmer P. Borst\textsuperscript{1}, and Jacolien van Rij\textsuperscript{1}

\textsuperscript{1} Bernoulli Institute for Mathematics, Computer Science, and Artificial Intelligence, Faculty of Science and Engineering, University of Groningen, Groningen, Netherlands

\subsection*{Content}
\begin{itemize}
\item Appendix A: The Extended Fellner \& Schall Update by Wood \&
Fasiolo
\item Appendix B: The Implicit Eigendecomposition Approach
\item Appendix C: Combining Structured \& Unstructured Quasi-Newton Updates for Efficient Smooth Model Estimation
\item Appendix D: Convergence to the Hessian in the General Case
\item Appendix E: Simulation Studies
\end{itemize}

\section{The Extended Fellner \& Schall Update by Wood \& Fasiolo}\label{app:efs}

This Appendix provides a brief review of the EFS update by \citet{wood_generalized_2017}. The update, designed to approximately optimize the Laplace-approximate Bayesian marginal log-likelihood $\mathcal{V}(\boldsymbol{\lambda})$ defined in (\ref{EQ:reml}) of the main manuscript, can be motivated by consideration of the first partial derivatives of $\mathcal{V}(\boldsymbol{\lambda})$, defined in (\ref{eq:reml_deriv}). Note, that since $\frac{\partial \mathcal{L}_\lambda}{\partial \boldsymbol{\beta}}\Big\rvert_{\hat{\boldsymbol{\beta}}}=\mathbf{0}$, computation of $\frac{\partial \mathcal{L}_\lambda}{\partial \lambda_r} + \frac{\partial \mathcal{L}_\lambda}{\partial \boldsymbol{\beta}}\frac{d \boldsymbol{\beta}}{d \lambda_r}$ given $\hat{\boldsymbol{\beta}}$ simplifies drastically \citep[e.g.,][]{wood_generalized_2017-1}.

\begin{equation}\label{eq:reml_deriv}
\frac{\partial \mathcal{V}}{\partial \lambda_r}=-\frac{\hat{\boldsymbol{\beta}}^\top\mathbf{S}^r\hat{\boldsymbol{\beta}}}{2} + \frac{1}{2}\frac{\partial log|\mathbf{\mathbf{S}_\lambda}|_+}{\partial \lambda_r} - \frac{1}{2}\frac{\partial log|\mathcal{H}|}{\partial \lambda_r}.
\end{equation}

In general, $\frac{\partial log|\mathcal{H}|}{\partial \lambda_r}=tr\left(\mathcal{H}^{-1}\left[\frac{\partial \mathbf{H}}{\partial \lambda_r} + \mathbf{S}^r\right]\right)$ with the inverse being replaced by a pseudo-inverse in case of a generalized determinant like $log|\mathbf{\mathbf{S}_\lambda}|_+$ \citep[see for example Appendix B of][]{wood_fast_2011}. However, assuming instead that $\mathbf{H}$ does not depend on $\boldsymbol{\lambda}$, the derivative defined in (\ref{eq:reml_deriv}) is non-zero only as long as $\left(tr\left(\mathbf{S}^{-}_\lambda \mathbf{S}^r\right) - tr\left(\mathcal{H}^{-1} \mathbf{S}^r\right) \right) \neq \boldsymbol{\hat{\beta}}^\top\mathbf{S}^r\boldsymbol{\hat{\beta}}$ with the sign of the derivative being determined by the direction of the inequality \citep[][]{wood_generalized_2017}.

Exploiting this relationship, the authors propose the EFS update defined in (\ref{eq:efs}) of the main manuscript. Note, that the update is only exact if $\mathbf{H}$ really does not depend on $\boldsymbol{\lambda}$, since otherwise $tr\left(\mathcal{H}^{-1} \mathbf{S}^r\right) \neq tr\left(\mathcal{H}^{-1}\left[\frac{\partial \mathbf{H}}{\partial \lambda_r} + \mathbf{S}^r\right]\right)$. Additionally, note that application of the update in practice requires limits on the minimum and maximum values for any $\lambda_r$, to address the case of the numerator or denominator approaching zero \citep[e.g.,][]{wood_generalized_2017}.

\section{The Implicit Eigendecomposition Approach}\label{app:ImplicitEigen}

This Appendix provides a review of the implicit Eigendecomposition approach proposed originally by \citet{burdakov_efficiently_2017} used to compute the positive semi-definite matrix $\hat{\mathbf{H}}_+^{i}$ that is closest, in terms of the Frobenious norm, to $\hat{\mathbf{H}}^{i}$ without explicitly forming an Eigendecomposition of the latter. The following relies on the definitions given in (\ref{eq:SR1}) and (\ref{eq:PSDSR1}) of the main manuscript. Specifically, let $\mathbf{B}^i$ and $\mathbf{U}^i$ be defined as in (\ref{eq:SR1}) and $\mathbf{Q}^i\mathbf{R}^i=\mathbf{U}^i$ again denote a thin QR decomposition. Finally, let $\mathrm{U}^i\boldsymbol{\Sigma}^i\mathrm{U}^{i\top}$ again denote the implicit Eigendecomposition defined in (\ref{eq:PSDSR1}), so that $\mathbf{R}^i\mathbf{B}^i\mathbf{R}^{i\top} = \mathrm{U}^i\boldsymbol{\Sigma}^i\mathrm{U}^{i\top}$. Super-scripts indicating the iteration $i$ of the quasi-Newton routine are omitted from now on to avoid clutter.

As mentioned, the approach considered here was originally proposed by \citet{burdakov_efficiently_2017}. However, \citet{erway_efficiently_2015} provide an alternative derivation that better highlights the relation between the implicit decomposition of $\hat{\mathbf{H}}^i$ and the explicit Eigendecomposition. We thus follow their approach. They start by considering the full QR-decomposition

\begin{equation}\label{eq:qr}
\mathrm{Q}\mathrm{R}=\left[\mathbf{Q~~\tilde{\mathbf{Q}}}\right]  \begin{bmatrix}
\mathbf{R}\\
\mathbf{0}\\
\end{bmatrix} = \mathbf{U}
\end{equation}

Here, $\mathrm{Q}$ is a $N_p*N_p$ orthogonal matrix of which $\mathbf{Q}$ makes up the first $N_u$ columns and $\mathbf{R}$ is a $N_u*N_u$ upper-triangular matrix \citep[][]{erway_efficiently_2015}. Next, \citet{erway_efficiently_2015} define the Eigendecomposition

\begin{equation}\label{eq:qr2}
\begin{split}
\mathrm{R}\mathbf{B}\mathrm{R}^\top &=
\begin{bmatrix}
\mathbf{R}\mathbf{B}\mathbf{R}^\top & \mathbf{0}\\
\mathbf{0} & \mathbf{0}\\
\end{bmatrix}=\begin{bmatrix}
\mathrm{U}\boldsymbol{\Sigma}\mathrm{U}^\top & \mathbf{0}\\
\mathbf{0} & \mathbf{0}\\
\end{bmatrix}
\\&=\mathrm{U''}\boldsymbol{\Sigma}''\mathrm{U''}^\top~\text{with}\\&~\mathrm{U''}=\begin{bmatrix}
\mathrm{U} & \mathbf{0}\\
\mathbf{0} & \mathbf{I}\\
\end{bmatrix}~\text{and}~
\boldsymbol{\Sigma}''=\begin{bmatrix}
\boldsymbol{\Sigma} & \mathbf{0}\\
\mathbf{0} & \mathbf{0}\\
\end{bmatrix}
\end{split}
\end{equation}

Then, $\mathrm{P}\boldsymbol{\Sigma}''\mathrm{P}^\top$, with $\mathrm{P} = \mathrm{Q}\mathrm{U''}$, is the explicit Eigendecomposition of $\mathbf{U}\mathbf{B}\mathbf{U}^\top$ and $\hat{\mathbf{H}} = \mathrm{P}(\mathbf{I} \gamma + \boldsymbol{\Sigma}'')\mathrm{P}^\top$ is the explicit Eigendecomposition of $\hat{\mathbf{H}}$ \citep[since $\mathrm{Q}$ is orthogonal;][]{erway_efficiently_2015}. In combination with (\ref{eq:qr2}), this provides the insight already stated in the main manuscript: negative Eigenvalues of $\hat{\mathbf{H}}$ originate from negative Eigenvalues in $\boldsymbol{\Sigma}$ with a magnitude greater than $\gamma$. This makes it straightforward to compute the nearest positive semi-definite matrix $\hat{\mathbf{H}}_+ = \mathrm{P}(\mathbf{I}\gamma + \Sigma''_+)\mathrm{P}^\top$, where

\begin{equation}\label{eq:evShift}
\boldsymbol{\Sigma}''_+=\begin{bmatrix}
\boldsymbol{\Sigma}_+ & \mathbf{0}\\
\mathbf{0} & \mathbf{0}\\
\end{bmatrix}~\text{with}~\boldsymbol{\Sigma}_{+\,l\,l} = \boldsymbol{\Sigma}_{l\,l} - min(0,\boldsymbol{\Sigma}_{l\,l} + \gamma),
\end{equation}

so that all negative Eigenvalues are shifted just enough that they become zero when adding $\gamma$ \citep[see][]{higham_computing_1988}. In a final step, \citet{erway_efficiently_2015} show that in practice $\hat{\mathbf{H}}_+$ can be computed without forming $\mathrm{Q}$, $\mathrm{U''}$, and $\boldsymbol{\Sigma}''$. They point out, that $\gamma$ and the decomposition $\mathbf{R}\mathbf{B}\mathbf{R}^\top$ already ``implicitly'' represent the entire spectrum of $\hat{\mathbf{H}}$. Thus, computing the thin QR-decomposition $\mathbf{Q}\mathbf{R}=\mathbf{U}$ is sufficient to obtain $\hat{\mathbf{H}}_+$ \citep[see also][]{burdakov_efficiently_2017}. Specifically, from this decomposition and (\ref{eq:qr2}), which implicitly requires only $\mathbf{R}$ but not $\mathrm{R}$, it follows that

\begin{equation}\label{eq:evEquality}
\begin{split}
\hat{\mathbf{H}} &= \mathbf{I}\gamma + \mathrm{P}\boldsymbol{\Sigma}''\mathrm{P}^\top\\&=\mathbf{I}\gamma + \mathbf{P}\boldsymbol{\Sigma}\mathbf{P}^\top,
\end{split}
\end{equation}

where $\mathbf{P}=\mathbf{Q}\mathrm{U}$ is defined as in (\ref{eq:PSDSR1}) of the main manuscript. Substituting $\boldsymbol{\Sigma}_+$ for $\boldsymbol{\Sigma}$ in (\ref{eq:evEquality}) then finally provides

\begin{equation}\label{eq:SR1UPHessianPSDAPD}
\begin{split}
\hat{\mathbf{H}}_+ &= \mathbf{I}\gamma + \mathrm{P} \boldsymbol{\Sigma}''_+\mathrm{P}^\top\\&= \mathbf{I}\gamma + \mathbf{P}\boldsymbol{\Sigma}_+\mathbf{P}^\top.
\end{split}
\end{equation}

The second line in (\ref{eq:SR1UPHessianPSDAPD}) is the result stated in the main text.

\section{Combining Structured \& Unstructured Quasi-Newton Updates for Efficient Smooth Model Estimation}\label{app:combqn}

This Appendix outlines two estimation strategies that enable use of the structured secant approximations to $\mathbf{H}$ in practice, while remaining efficient. The point is that, as discussed in the main manuscript, $(\hat{\mathbf{H}}_+ + \mathbf{S}_\lambda)^{-1}$ is only really required at convergence as a substitute in (\ref{eq:efs}) and that, for the task of obtaining $\hat{\boldsymbol{\beta}}$, it will be far more efficient to rely on unstructured (limited-memory) BFGS updating of $\hat{\mathcal{H}}^{i-1}$ as described at the beginning of section (\ref{sec:Method}). Since $\mathcal{H}$ should itself be positive definite at the converged estimate -- assuming that the model is not miss-specified and that no parameters are unidentifiable -- relying on BFGS updating will also generally be appropriate for this task.

Fortunately, the two approaches can readily be combined: (Limited-memory) BFGS updating can be used to estimate $\hat{\boldsymbol{\beta}}$ in a first step and only the last $N_u$ pairs of update vectors $(\mathbf{v}_i,\mathbf{s}_i)$ are retained. The $\mathbf{v}_i$ are then transformed as shown in (\ref{eq:v_sharp}) to obtain $N_u$ structured update vector pairs $\mathbf{v}_i',\mathbf{s}_i$. Finally, the latter can be used to construct $\hat{\mathbf{H}}_+$ as shown in (\ref{eq:PSDSR1}) and subsequently to compute the pq-EFS update by replacing the trace in (\ref{eq:efs}) with the approximation in (\ref{eq:lbfgsTrace}).

This sequential strategy can still become expensive when relying on (\ref{eq:psdblockHsum}) as discussed in section \ref{sec:structuredQN2}, in principle requiring $N_u$ evaluations of $\mathbf{H}_{\mathcal{B}\mathcal{B}}$ and $\mathbf{H}_{\mathcal{B}\mathbf{b}}$. This can be avoided by sampling $N_u$ steps (or rather perturbations) $\mathbf{s}_i$ and then setting $\hat{\boldsymbol{\beta}}_{i-1} = \hat{\boldsymbol{\beta}} - \mathbf{s}_i$, where $\hat{\boldsymbol{\beta}}$ again denotes the estimate at convergence of the quasi-Newton routine (i.e., the maximizer of $\mathcal{L}_\lambda$ given $\boldsymbol{\lambda}$). Having all $N_u$ steps end at $\hat{\boldsymbol{\beta}}$ implies $\hat{\boldsymbol{\beta}}_i = \hat{\boldsymbol{\beta}},\ i=1,...,N_u$ so that the same $\mathbf{H}_{\mathcal{B}\mathcal{B}}$ and $\mathbf{H}_{\mathcal{B}\mathbf{b}}$ can be used to obtain $\mathbf{v}_i'',\ i=1,...,N_u$ as defined in (\ref{eq:SchurSecant}).

More generally, having all $N_u$ steps end at $\hat{\boldsymbol{\beta}}$ has also been shown to result in more accurate Hessian approximations compared to the sequential approach outlined above \citep[][]{berahas_quasi-newton_2022}. While \citet{berahas_quasi-newton_2022} suggest to sample the perturbations uniformly, we can also exploit the fact that the normal approximation to the posterior $\boldsymbol{\beta}|\mathbf{y},\boldsymbol{\lambda} \sim N(\hat{\boldsymbol{\beta}},\mathcal{H}^{-1})$ has good theoretical support \citep[see][]{wood_smoothing_2016}. This suggests to instead draw the perturbations from $N(\mathbf{0},\omega(\hat{\mathbf{H}}_+ + \mathbf{S}_\lambda)^{-1})$, where $\hat{\mathbf{H}}_+$ is formed as for the sequential strategy (without incorporating $\mathbf{H}_{\mathcal{B}\mathcal{B}}$ and $\mathbf{H}_{\mathcal{B}\mathbf{b}}$), or $N(\mathbf{0},\omega(\mathbf{I}\gamma + \mathbf{S}_\lambda)^{-1})$ if the first alternative is too expensive to evaluate. $\omega$ here denotes a scaling factor that can for example be adjusted in a $\boldsymbol{\lambda}$-specific warm-up phase via dual-averaging to achieve a pre-specified Metropolis acceptance probability for the sampled perturbations \citep[e.g.,][]{hoffman_no-u-turn_2014}.

\section{Convergence to the Hessian in the General Case}\label{app:theorem2}

This Appendix discusses convergence of the structured Hessian updates described in sections \ref{sec:structuredQN}-\ref{sec:structuredQN2} of the main manuscript for more general likelihood functions, which are not necessarily (strictly) concave or quadratic. Theorem \ref{theorem:thessian2}, again a generalization of Theorem 6.2 given by \citet{nocedal_numerical_2006}, investigates this for the update in section \ref{sec:structuredQN2}. The subsequent remark considers the update in section \ref{sec:structuredQN} as well.

\begin{theorem}\label{theorem:thessian2}
    Let $\mathcal{L}_\lambda(\boldsymbol{\beta}) = \mathcal{L}(\boldsymbol{\beta}) - \frac{1}{2}\boldsymbol{\beta}^\top\mathbf{S}_\lambda\boldsymbol{\beta}$ be the penalized log-likelihood for a twice continuously differentiable log-likelihood function $\mathcal{L}$ for any given $\boldsymbol{\lambda}$. We assume that $\hat{\boldsymbol{\beta}}$, the maximizer of $\mathcal{L}_\lambda$, exists and is unique. Further, we assume that the Hessian $\mathcal{H}$ of the negative penalized log-likelihood, with $\mathcal{H} = \mathbf{H} + \mathbf{S}_\lambda$ and $\mathbf{H}=-\frac{\partial^2 \mathcal{L}}{\partial \boldsymbol{\beta} \partial \boldsymbol{\beta}^\top}\Big\rvert_{\hat{\boldsymbol{\beta}}}$, is bounded and Lipschitz continuous in a neighborhood of $\hat{\boldsymbol{\beta}}$. Additionally, we assume $\mathbf{s}_i,\ i\in \mathbb{N}$ to be an infinite sequence of perturbations with $\hat{\boldsymbol{\beta}}_{i-1} = \hat{\boldsymbol{\beta}} - \mathbf{s}_i$. We assume that the perturbations are obtained in a way that the running set $\left[...,\mathbf{s}'_{i-1},\mathbf{s}'_i\right]$ of $min(i,N_u)$ transformed perturbations $\mathbf{s}_i' = (\mathbf{s}_i)_{[\mathbf{b}]}$, relying on the same indexing operation defined in (\ref{eq:SchurSecant}), remains linearly independent and that $\lim_{i\to\infty} \epsilon_i = 0$ where $\epsilon_i = max(||\mathbf{s}_j||:\ max(1,i-N_u+1) \leq j \leq i)$ defines the maximum perturbation norm in the running set for iteration $i$. Finally, we define $\hat{\mathbf{D}}^i$ and $\hat{\mathbf{H}}^i$ as in Theorem (\ref{theorem:thessian}), with the aforementioned running set forming the queue for the limited-memory SR1 approximation $\hat{\mathbf{D}}_i$.

    Then, Theorem (\ref{theorem:thessian}) from the main manuscript applies asymptotically, ensuring that
    
    $$\lim_{i\to\infty} ||\mathbf{H} - \hat{\mathbf{H}}^i|| = 0$$

    iff $(\tilde{\mathbf{D}}^{j-1}\mathbf{s}_i' - \mathbf{v}_{i}'')^\top\mathbf{s}_i' \neq 0\ \forall\ max(1,i-N_u+1) \leq j \leq i$, where $\tilde{\mathbf{D}}$ is defined as in Theorem (\ref{theorem:thessian}).
\end{theorem}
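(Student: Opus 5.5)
We will compare the SR1 approximation built from the actual (nonquadratic) data with the one Theorem \ref{theorem:thessian} would build from exact quadratic data, and show that the two agree up to an error of order $\epsilon_i$. Throughout we take $N_u = N_\mathbf{b}$, as in Theorem \ref{theorem:thessian}. This is the setting of Theorem 6.2 of \citet{nocedal_numerical_2006}, transported to the generalized Schur complement.

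\textbf{Step 1: the secant data nearly satisfy the exact secant equation.} Let $\mathbf{H}$ denote the negative Hessian of $\mathcal{L}$ at $\hat{\boldsymbol{\beta}}$. Every perturbation ends at $\hat{\boldsymbol{\beta}}$, so the blocks $\mathbf{H}_{\mathcal{B}\mathcal{B}}$ and $\mathbf{H}_{\mathcal{B}\mathbf{b}}$ used in (\ref{eq:SchurSecant}) are all evaluated at $\hat{\boldsymbol{\beta}}$ and do not depend on $i$. By the integral form of the mean value theorem,
$$\mathbf{v}_i' = \int_0^1 \mathbf{H}(\hat{\boldsymbol{\beta}} - t\mathbf{s}_i)\,dt\;\mathbf{s}_i .$$
Since the penalty term is exactly quadratic, the Lipschitz assumption on $\mathcal{H}$ transfers to $\mathbf{H}$. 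Hence $\mathbf{v}_i' = \mathbf{H}\mathbf{s}_i + \boldsymbol{\delta}_i$ with $\|\boldsymbol{\delta}_i\| \le \tfrac{L}{2}\|\mathbf{s}_i\|^2$.

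Subtracting the exact $\mathcal{B}$-column contributions in (\ref{eq:SchurSecant}), and applying the same algebra that proves Theorem \ref{theorem:thessian} in the quadratic case, gives
$$\mathbf{v}_i'' = \mathbf{D}\mathbf{s}_i' + \mathbf{r}_i + (\boldsymbol{\delta}_i)_{[\mathbf{b}]},$$
where $\mathbf{D}$ is the generalized Schur complement at $\hat{\boldsymbol{\beta}}$. The term $\mathbf{r}_i$ collects $\mathcal{B}$-block terms. These cancel exactly under the range condition $\mathbf{H}_{\mathcal{B}\mathcal{B}}(\mathbf{H}_{\mathcal{B}\mathcal{B}})^{-}\mathbf{H}_{\mathcal{B}\mathbf{b}} = \mathbf{H}_{\mathcal{B}\mathbf{b}}$ of section \ref{sec:idpsdapprox}, which we adopt, so $\mathbf{r}_i = \mathbf{0}$. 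Either way, $\|\mathbf{v}_i'' - \mathbf{D}\mathbf{s}_i'\| = O(\|\mathbf{s}_i\|^2) = O(\epsilon_i\|\mathbf{s}_i\|)$.

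\textbf{Step 2: reduce to scale-invariant quantities.} The SR1 recursion producing $\tilde{\mathbf{D}}^{N_u}$ from the $N_u$ pairs in the queue at iteration $i$ is invariant under rescaling each pair $(\mathbf{s}_j',\mathbf{v}_j'')$ by a common scalar. We therefore normalize to $\mathbf{u}_j = \mathbf{s}_j'/\|\mathbf{s}_j'\|$. The normalized right-hand sides then satisfy $\mathbf{w}_j = \mathbf{D}\mathbf{u}_j + O(\epsilon_i)$.

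A mild technical point: we need $\|\mathbf{s}_j\|/\|\mathbf{s}_j'\|$ to stay bounded. We add this as a stated nondegeneracy condition, which holds for instance when the perturbations are supported on the $\mathbf{b}$ coordinates. Given this, the final approximation is a fixed rational function $F(\mathbf{u}_1,\dots,\mathbf{u}_{N_u},\mathbf{w}_1,\dots,\mathbf{w}_{N_u},\gamma')$, namely the compact form (\ref{eq:SR1}). Theorem \ref{theorem:thessian} says exactly that
$$F(\mathbf{u},\mathbf{D}\mathbf{u},\gamma') = \mathbf{D}$$
whenever the $\mathbf{u}_j$ are linearly independent and every SR1 denominator is nonzero.

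\textbf{Step 3: continuity argument (main obstacle).} We want to conclude $F(\mathbf{u},\mathbf{w}) - F(\mathbf{u},\mathbf{D}\mathbf{u}) \to 0$. This requires $F$ to be uniformly continuous along the sequence. That holds if two things are bounded away from degeneracy: the denominators $(\tilde{\mathbf{D}}^{j-1}\mathbf{s}_j' - \mathbf{v}_j'')^\top \mathbf{s}_j' / \|\mathbf{s}_j'\|^2$, and the smallest singular value of $[\mathbf{u}_j]$. Mere nonvanishing, as in the stated hypothesis, is not enough for a limit statement.

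I would make this explicit the way Nocedal \& Wright do. I would strengthen the hypothesis to a uniform bound $|(\cdot)^\top\mathbf{s}_j'| \ge r\,\|\mathbf{s}_j'\|\,\|\cdot\|$ and to uniform linear independence. I would then run an induction over the $N_u$ updates. At each stage $j$, I would show
$$\|(\tilde{\mathbf{D}}^{j} - \mathbf{D})\mathbf{u}_k\| = O(\epsilon_i) \quad \text{for all } k \le j,$$
exactly as in the proof of Theorem 6.2, where the constants grow with $j$ but are fixed once $N_u$ is fixed. Tracking the perturbation through the residual recursion (\ref{eq:residual_matrix}) is the delicate step: it is the one place where the denominator bound enters.

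\textbf{Conclusion.} At $j = N_u = N_\mathbf{b}$, the normalized vectors $\mathbf{u}_k$ span $\mathbb{R}^{N_\mathbf{b}}$ with uniformly bounded inverse, so $\|\tilde{\mathbf{D}}^{N_u} - \mathbf{D}\| = O(\epsilon_i) \to 0$. We have $\hat{\mathbf{D}}^i = \tilde{\mathbf{D}}^{N_u}$ by the first line of the proof of Theorem \ref{theorem:thessian}. Substituting into (\ref{eq:indefblockHsum}), every block except the $\mathbf{b}\mathbf{b}$ block is exact, so
$$\|\hat{\mathbf{H}}^i - \mathbf{H}\| = \|\hat{\mathbf{D}}^i - \mathbf{D}\| \to 0.$$

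For the converse direction of the iff: if some SR1 update is undefined, the step is skipped, the kernel-expansion argument loses a dimension, and the limit cannot be guaranteed. As in Theorem \ref{theorem:thessian}, this is read as necessity of the condition for the construction to apply at all.
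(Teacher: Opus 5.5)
You follow the paper's outline up to the final limit. The shared steps are:
\begin{itemize}
\item a Lipschitz bound on the secant error;
\item the observation that the $\mathcal{B}$-column terms subtracted in (\ref{eq:SchurSecant}) leave $\mathbf{v}_j''=\mathbf{D}\mathbf{s}_j'$ plus a small remainder;
\item SR1 on the generalized Schur complement with $N_u=N_\mathbf{b}$;
\item exactness of every block of (\ref{eq:indefblockHsum}) except $\mathbf{b}\mathbf{b}$.
\end{itemize}
After this, the paper only shows $\|\mathbf{E}^j\mathbf{s}_j'\|=\|\mathbf{r}_j'\|$ at the moment a pair enters the queue. It then asserts that (\ref{eq:residual_matrix2}) ``asymptotically behaves like'' the quadratic recursion, so that kernel expansion gives $\|\mathbf{E}^i\|\to 0$.

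You instead use the Conn--Gould--Toint induction behind Theorem 6.2 of Nocedal and Wright. It bounds how each later update in the queue disturbs $\mathbf{E}\mathbf{u}_k$ for earlier $k$. This is necessary because the exact hereditary property is lost for non-quadratic $\mathcal{L}$: the quantity $(\mathbf{v}_k'')^\top\mathbf{s}_j'-(\mathbf{s}_k')^\top\mathbf{v}_j''$ no longer vanishes and is only $O(\epsilon_i\|\mathbf{s}_k'\|\|\mathbf{s}_j'\|)$ under your ratio bound. You then turn $O(\epsilon_i)$ errors on a uniformly independent set into $\|\hat{\mathbf{D}}^i-\mathbf{D}\|=O(\epsilon_i)$.

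Your route costs three hypotheses absent from the statement: a uniform denominator bound, uniform linear independence, and bounded $\|\mathbf{s}_j\|/\|\mathbf{s}_j'\|$. In return it gives an actual proof with a rate. You are right that these hypotheses are needed, so the paper's last step does not go through as stated. For instance, take $\mathbf{H}=\mathbf{I}$ at $\hat{\boldsymbol{\beta}}$, let the only non-quadratic part of $\mathcal{L}$ be a term with nonzero third derivative in $\beta_{\mathbf{b},1}$, and take $N_\mathbf{b}=N_u=2$, $\gamma'=2$ and running sets $\{t\mathbf{e}_1,\,2t(\cos t^2,\sin t^2)^\top\}$. Both SR1 denominators are nonzero and even satisfy your $r$-condition, yet $\hat{\mathbf{D}}\to\mathrm{diag}(1,2)\neq\mathbf{D}$ as $t\to0$. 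So, contrary to the paper's remark, uniform independence is needed for perturbations too. Likewise, without your ratio bound, the paper's claim $\|\mathbf{r}_j'\|/\|\mathbf{s}_j'\|\to 0$ does not follow from $\|\mathbf{r}_j'\|\le L\epsilon_i\|\mathbf{s}_j\|$.

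Two smaller points.

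First, the range condition you adopt in Step 1 is not needed there. By definition of $\mathbf{D}$, $\mathbf{H}_{\mathbf{b}\mathbf{b}}=\mathbf{D}+\mathbf{H}_{\mathbf{b}\mathcal{B}}(\mathbf{H}_{\mathcal{B}\mathcal{B}})^{-}\mathbf{H}_{\mathcal{B}\mathbf{b}}$ for any generalized inverse. Hence $\mathbf{v}_i''-\mathbf{D}\mathbf{s}_i'=(\boldsymbol{\delta}_i)_{[\mathbf{b}]}$ identically, as in (\ref{eq:approxSchurSecant}). Your integral form of the mean value theorem is also the correct tool, since a single intermediate point as in (\ref{eq:MVT}) need not exist for a vector-valued gradient.

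Second, neither you nor the paper proves the ``only if'' direction, and it is false as written. For a quadratic $\mathcal{L}$ with $\mathbf{D}=\gamma'\mathbf{I}$, every denominator vanishes and every update is skipped, yet $\hat{\mathbf{H}}^i=\mathbf{H}$. Treating the condition as a well-definedness assumption is the right reading, but say so explicitly instead of presenting it as an argument.
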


\begin{proof}
    In what follows, let $\mathbf{D}$, $\tilde{\mathbf{D}}$, and $\hat{\mathbf{H}}$ all be defined as in Theorem (\ref{theorem:thessian}).
    We first note that because  $\mathcal{H}$ is bounded and Lipschitz continuous in a neighborhood of $\hat{\boldsymbol{\beta}}$, so is $\mathbf{H}$. This implies that for any $\hat{\boldsymbol{\beta}}_{i}$ in a sufficiently close neighborhood of $\hat{\boldsymbol{\beta}}$ we have $||\mathbf{H}^i - \mathbf{H}|| \leq L||\hat{\boldsymbol{\beta}}_{i-1} - \hat{\boldsymbol{\beta}}|| = L||\mathbf{s}_i||$ with $\mathbf{H}^i=-\frac{\partial^2 \mathcal{L}}{\partial \boldsymbol{\beta} \partial \boldsymbol{\beta}^\top}\Big\rvert_{\hat{\boldsymbol{\beta}}_{i-1}}$.
    
    Next, we again consider the residual matrix $\mathbf{E}^i = \tilde{\mathbf{D}}^i - \mathbf{D}$, again reflecting the residual between the approximation at iteration $i$ and the true Schur complement \citep[cf.,][see also Theorem \ref{theorem:thessian}]{dennis_numerical_1996,ye_towards_2023}. Because we assume that $(\tilde{\mathbf{D}}^{j-1}\mathbf{s}_i' - \mathbf{v}_{i}'')^\top\mathbf{s}_i' \neq 0\ \forall\ max(1,i-N_u+1) \leq j \leq i$, we can again guarantee that $\tilde{\mathbf{D}}^i\mathbf{s}_i' = \mathbf{v}_i''\ \forall\ i$.
    
    The latter no longer holds exactly for a general log-likelihood when substituting the true Schur complement $\mathbf{D}$ for $\tilde{\mathbf{D}}^i$. However, by the mean value theorem and in a sufficiently close neighborhood of $\hat{\boldsymbol{\beta}}$, we have

    \begin{equation}\label{eq:MVT}
        \begin{split}
        \mathbf{v}_{j}' &= \tilde{\mathbf{H}}^j\mathbf{s}_j\ \text{with}\ \tilde{\mathbf{H}}^j =\frac{\partial^2 \mathcal{L}}{\partial \boldsymbol{\beta} \partial \boldsymbol{\beta}^\top}\Big\rvert_{\tilde{\boldsymbol{\beta}}_j}\\
        &=\mathbf{H}\mathbf{s}_j + \left(\tilde{\mathbf{H}}^j-\mathbf{H}\right)\mathbf{s}_j\\
        &=\mathbf{H}\mathbf{s}_j\ + \mathbf{r}_j\ \text{with}\ ||\mathbf{r}_j|| \leq L\epsilon_i||\mathbf{s}_j^2||
        \end{split}
    \end{equation}

    for all $max(1,i-N_u+1) \leq j \leq i$ (i.e., all elements in the running set at iteration $i$) where $\tilde{\boldsymbol{\beta}}_j$ lies on the line from $\hat{\boldsymbol{\beta}}_{j-1}$ to $\hat{\boldsymbol{\beta}}$ \citep[cf.,][]{conn_convergence_1991}. Thus, we also have

    \begin{equation}\label{eq:approxSchurSecant}
        \begin{split}
        &(\mathbf{v}_{j}')_{[\mathbf{b}]} = \left(\left[\mathbf{0}~\mathbf{D} + \mathbf{H}_{\boldsymbol{b}\mathcal{B}}(\mathbf{H}_{\mathcal{B}\mathcal{B}})^{-}\mathbf{H}_{\mathcal{B}\boldsymbol{b}}\right] + \left[\mathbf{H}_{\boldsymbol{b}\mathcal{B}}~\mathbf{0}\right]\right)\mathbf{s}_j + \mathbf{r}'_j\text{,}\\
        &\text{that is }\mathbf{D}\mathbf{s}_j' + \mathbf{r}'_j = \mathbf{v}_{j}''\ \text{with}\ \mathbf{s}_j' = (\mathbf{s}_j)_{[\mathbf{b}]}\text{,}\ ||\mathbf{r}'_j|| \leq ||\mathbf{r}_j||\text{, and}\\ &\mathbf{v}_{j}'' = (\mathbf{v}_{j}')_{[\mathbf{b}]}- \left(\left[\mathbf{0}~ \mathbf{H}_{\boldsymbol{b}\mathcal{B}}(\mathbf{H}_{\mathcal{B}\mathcal{B}})^{-}\mathbf{H}_{\mathcal{B}\boldsymbol{b}}\right]+ \left[\mathbf{H}_{\boldsymbol{b}\mathcal{B}}~\mathbf{0}\right]\right)\mathbf{s}_j
        \end{split}
    \end{equation}

    Finally, based on the definitions in (\ref{eq:MVT}-\ref{eq:approxSchurSecant}) we have that
    
    $$\mathbf{v}_{j}'' - \tilde{\mathbf{D}}^{j-1}\mathbf{s}_j' = \mathbf{D}\mathbf{s}_j'- \tilde{\mathbf{D}}^{j-1}\mathbf{s}_j' + \mathbf{r}'_j =-\mathbf{E}^{j-1}\mathbf{s}' + \mathbf{r}'_j$$
    
    which again permits recursive definition of a residual matrix \citep[cf.,][]{ye_towards_2023}, now for the running set of perturbations at iteration $i$, as illustrated in (\ref{eq:residual_matrix2}).
    
    \begin{equation}\label{eq:residual_matrix2}
    \mathbf{E}^j = \mathbf{E}^{j-1} + \frac{(-\mathbf{E}^{j-1}\mathbf{s}' + \mathbf{r}'_j)(-\mathbf{E}^{j-1}\mathbf{s}' + \mathbf{r}'_j)^\top}{(-\mathbf{E}^{j-1}\mathbf{s}' + \mathbf{r}'_j)^\top\mathbf{s}_j'}
    \end{equation}
    
    In contrast to the quadratic case outlined in Thereom \ref{theorem:thessian}, multiplication of (\ref{eq:residual_matrix2}) by $\mathbf{s}_{j}'$ only yields
    
    \begin{equation}\label{eq:kernel2}
    ||\mathbf{E}^j\mathbf{s}_{j}'|| = ||\mathbf{E}^{j-1}\mathbf{s}_{j}' - \mathbf{E}^{j-1}\mathbf{s}_{j}' + \mathbf{r}'_j|| = ||\mathbf{r}'_j|| \neq 0
    \end{equation}
    
    However, the final inequality in (\ref{eq:kernel2}) asymptotically becomes an equality for all $max(1,i-N_u+1) \leq j \leq i$, since $\lim_{i\to\infty} ||\epsilon_i|| = 0$. In fact, since $||\mathbf{r}'_j|| \leq L\epsilon_i||\mathbf{s}_j^2||$, $\frac{||\mathbf{E}^j\mathbf{s}_{j}'||}{||\mathbf{s}_{j}'||} \rightarrow 0$ for all $max(1,i-N_u+1) \leq j \leq i$, which ensures that asymptotically the residual matrix for the running set, defined in (\ref{eq:residual_matrix2}), behaves just like the residual matrix for the quadratic case outlined in Theorem \ref{theorem:thessian} of the main manuscript so that $\lim_{i\to\infty} ||\mathbf{E}^i|| = 0$ by the repeated Kernel expansion discussed in Theorem \ref{theorem:thessian}. This implies $\lim_{i\to\infty} ||\mathbf{H} - \hat{\mathbf{H}}^i|| = 0$.
\end{proof}

\begin{remark}
    The proof again relies on previous work by \citet{ye_towards_2023} and \citet{conn_convergence_1991}. The latter presented the same result for sequential strategies, like the one described in Appendix \ref{app:combqn}, where the perturbations defined by Theorem \ref{theorem:thessian2} become quasi-Newton steps. In particular, the proof by \citet{conn_convergence_1991} provides the same error bound stated here in (\ref{eq:MVT}). Notably, when working with sequences of quasi-Newton steps, the independence assumption of Theorem \ref{theorem:thessian2} needs to be replaced with the assumption that the transformed sequential steps $\mathbf{s}_i' = (\mathbf{s}_i)_{[\mathbf{b}]}$ remain uniformly linearly independent \citep[see][for corresponding Theorems and Proofs]{conn_convergence_1991,nocedal_numerical_2006,ye_towards_2023}. For example, Theorem 6.2 given by \citet{nocedal_numerical_2006} provides the same result stated in Theorem \ref{theorem:thessian2} for the $N_\mathbf{b}=N_p$ case that arises when applying the update from section \ref{sec:structuredQN} (sequentially).

    Working in terms of a more general sequence of perturbations, rather than sequential steps, has the advantage that this provides important insights into how the theoretical convergence guarantee of Theorem \ref{theorem:thessian2} can be utilized to optimize performance in practice. Specifically, consider that one option to define the sequence of perturbations required by Theorem \ref{theorem:thessian2} is to assume that each perturbation $\mathbf{s}_i$ in the current running set is sampled from $N(\mathbf{0},\omega_i\mathbf{I})$, where $\omega_i \rightarrow 0$ as $i \rightarrow \infty$. This suggests that in practice, whenever $N_u=N_{\mathbf{b}}$, we should sample a single (running) set of perturbations from $N(\mathbf{0},\omega\mathbf{I})$ with $\omega$ as small as possible. This is complicated by the fact that ``what is possible'' will depend on the specific implementation in computer code and the machine precision. Fortunately, the sampling strategy described in Appendix \ref{app:combqn} provides a straightforward way to estimate $\omega$ in practice. Also note, that when $N_u < N_{\mathbf{b}}$ the performance of the sampling strategy will generally benefit from the more complex covariance matrix options described in Appendix \ref{app:combqn}.
\end{remark}

\section{Simulation Studies}\label{app:sim}

This Appendix describes the setup of the simulation studies performed to validate the qEFS update's performance in practice and highlights the main results. Code to replicate all of the simulation studies as well as visualizations of all results are provided in the supplementary GitHub repository for this paper.

\paragraph{MSE Performance (Simulations 1-3)}

\begin{figure}[h!]
\caption{Functions Used in Simulation Studies}
\begin{center}
\includegraphics[width=\linewidth]{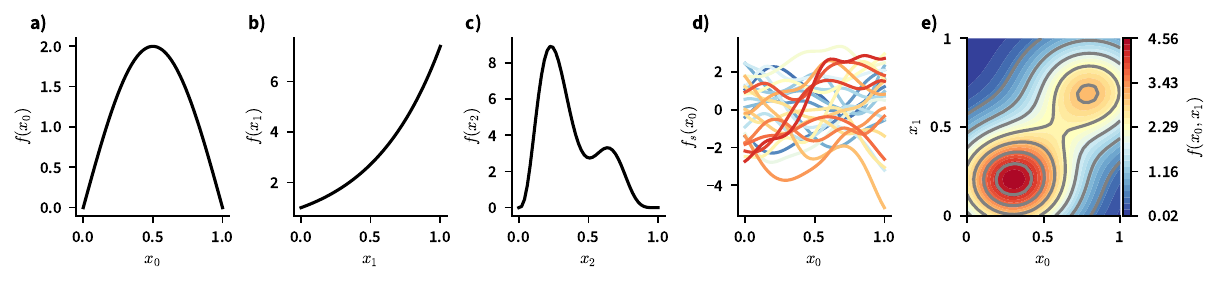}
\end{center}

        {\small
         Overview of the true smooth functions used across simulation studies. Panels a-c show smooth functions $f(x0)$, $f(x_1)$, and $f(x_2)$ used in previous simulations by \citet{gu_minimizing_1991}, $f(x_3)$ was simply set to zero everywhere. Panel d shows a random selection of the random non-linear functions $f_s(x_0)$ used in the third simulation study. Panel e shows the non-linear interaction $f(x_1,x_2)$ used in the second simulation study. The function was used previously by \citet{wood_straightforward_2013}.
        }

\label{fig:sim_func}
\end{figure}

To illustrate the performance of the qEFS update in practice, a series of simulation studies was conducted. The setup generally matched the one described by \citet{wood_smoothing_2016}. Simulations were performed for four GAMs and four more general smooth models (Cox proportional Hazard, multivariate Gaussian, Multinomial, and Scaled-T models) and involved at least one linear predictor of the form $\eta = f(x_0) + f(x_1) + f(x_2) + f(x_3)$. In case models required multiple linear predictors the functions were distributed across these. Two additional sets of simulations were performed: in one the additive effect $f(x_1) + f(x_2)$ was replaced with a non-linear interaction $f(x_1,x_2)$ and in the other random non-linear curves $f_s(x_0)$ were added to obtain a multi-level smooth model. Figure \ref{fig:sim_func} provides an overview of the functions used across simulations. Simulations were repeated for uncorrelated and correlated predictors to assess robustness.

The mean squared error (MSE) between the true linear predictor(s) $\boldsymbol{\eta}$ and the predicted version $\hat{\boldsymbol{\eta}}$ was used to compare the performance of the standard EFS update as well as four different versions of the structured qEFS update presented here, with $N_\mathcal{B} \in \lbrace0,0.25N_F,0.5N_F,0.75N_F\rbrace$ and $N_u=0.25N_\mathbf{b}$, to the method by \citet{wood_smoothing_2016}. $N_F$ here denotes the number of ``fixed'' coefficients, i.e., $N_F$ was the same for the first simulation and the multi-level one, but the ratio $N_F/N_p$ was substantially smaller in the second one. Also, for the first two simulations $N_F=N_p$. Which of the ``fixed'' coefficients to include in $\mathcal{B}$ was randomly determined. The EFS \& qEFS models were estimated using the \texttt{mssm} Python toolbox \citep[][]{krause_mixed-sparse-smooth-model_2025} and \texttt{mgcv} in R \citep[][]{wood_generalized_2017-2} was used to estimate the models based on the method by \citet{wood_smoothing_2016}. Note, that if the model is a GAM, \texttt{mgcv} falls back to the method by \citet{wood_fast_2011}, which can be considered a GAM-specific implementation of the more general framework presented by \citet{wood_smoothing_2016}.

Figure \ref{fig:sim_mse} of the main manuscript provides an exemplary overview of the results for the second simulation. Both the standard EFS and qEFS updates performed well when compared directly to the method by \citet{wood_smoothing_2016}. MSE scores were similar on average for most likelihoods, except for the Multionimial, Binomial, and Poisson cases -- here the EFS and qEFS updates consistently outperformed the method by \citet{wood_smoothing_2016}. \texttt{mgcv} also reported multiple convergence failures for the Binomial case (for correlated predictors only) and repeatedly terminated with an error for the Multinomial case (for correlated predictors only). In contrast, the EFS and qEFS updates never failed outright. Regarding the latter, MSE scores typically decreased and became less variable with increases to $N_\mathcal{B}$. However, already for $N_\mathcal{B}=0$ the qEFS update produced estimates of sufficient quality for practical application. Notably, performance of the qEFS update remained similar in the first and last simulations, differing only in the presence of the $f_s$ and thus the ratio $N_F/N_p$. This suggests that large computational savings will be possible for multi-level models.

\paragraph{Secondary Tasks (Simulations 4-5)}

To test whether $\hat{\mathbf{H}}$ continues to be a useful substitute for $\mathbf{H}$ in secondary tasks (e.g., confidence interval (CI) coverage and model selection), we first evaluated the average coverage of $\boldsymbol{\eta}$ achieved by an approximate 95\% CI around $\hat{\boldsymbol{\eta}}$ obtained as part of the simulation studies described in the previous section. Figure \ref{fig:sim_coverage} in the main manuscript again provides an exemplary illustration of the results from the second simulation study involving a non-linear interaction $f(x_1,x_2)$.

We also conducted two simulation studies to evaluate $\hat{\mathbf{H}}$ in the context of model selection via the cAIC. The linear predictor was again of the form used in the previous studies. However, in one set of simulations $f(x_0)$ was now replaced with $ef(x_0)$, while 40 random intercepts drawn from $N(0,e)$ were added to $\eta$ in another set. 10 equidistantly spaced values were considered for $e\in [0,1]$, reflecting the effect strength \citep[cf.][]{wood_smoothing_2016}. Since this results in many more simulations, only the general smooth models of the previous simulations were considered for this study.

The choice about which coefficients to include in $\mathcal{B}$ is more complicated when using $\hat{\mathbf{H}}$ for model selecting. For other simulations $\mathcal{B}$ was simply chosen at random, but when comparing models it would intuitively make more sense for both models to share the same set $\mathcal{B}$, to try and ensure that the approximation error ends up being more or less the same for both models. The problem is that even though $N_\mathbf{b}$ would be the same for both models, both models might still feature very different $\mathbf{H}_{\mathbf{b}\mathbf{b}}$ blocks at convergence -- and it is not clear that the same $N_u$ would work well to approximate them both. This could result in differences between models in the approximation error, which could influence the outcome of the comparison.

To address these problems, we chose $\mathcal{B}$ to include all coefficients involved in the selection, since these are of main interest, plus a single coefficient present in both models (the intercept of the first linear predictor, if present). We also set $N_u=0.75N_\mathbf{b}$, so that the approximation error would generally be low. For both studies we also again estimated one model with $N_\mathcal{B}=0$ so that we could compare both qEFS updates to the EFS update by \citet{wood_generalized_2017} and the method by \citet{wood_smoothing_2016}. Figure \ref{fig:sim_selection} in the main manuscript illustrates the results for the random term selection.
\printbibliography[title={References}]
\end{refsection}

\end{document}